\documentclass{statsoc}
\usepackage{amsthm}
\usepackage[a4paper]{geometry}
\usepackage{float}
\usepackage{graphicx}
\usepackage[table,xcdraw]{xcolor}
\usepackage{breakcites}
\usepackage{natbib}
\usepackage{xurl}
\usepackage{subcaption}
\usepackage{longtable}
\usepackage{multirow,verbatim}
\usepackage{epsfig}
\usepackage{graphicx}
\usepackage{amsmath}
\usepackage{amsfonts}
\usepackage{amssymb}
\usepackage{booktabs}

\RequirePackage[colorlinks,citecolor=blue,urlcolor=blue]{hyperref}

\setlength{\oddsidemargin}{4pc}
\setlength{\topmargin}{3pc} 
\setlength{\textwidth}{33pc}
\setlength{\textheight}{48\baselineskip}
\setlength\evensidemargin{3.5pc}

\newcommand{\Appendix}
{
\def\thesection{Appendix~\Alph{section}}
\def\thesubsection{A.\arabic{subsection}}
}

\theoremstyle{plain}
\numberwithin{equation}{section} \theoremstyle{plain}
\newtheorem{thm}{Theorem}[section]

\newtheorem*{repthm*}{Theorem 1}
\newtheorem{cor}{Corollary}
\newtheorem{lem}{Lemma}

\theoremstyle{definition}
\newtheorem{defn}{Definition}[section]
\theoremstyle{remark}

\DeclareGraphicsRule{.JPG}{eps}{*}{`jpeg2ps #1}

\title[Sliced Elastic Distance]{Sliced Elastic Distance for Evaluating Amplitude and Phase Differences in Precipitation Models}

\author{Robert C. Garrett}
\address{Department of Statistics, 
University of Illinois Urbana-Champaign,
Champaign, IL,
US.}

\author{Trevor Harris}
\address{Department of Statistics,
University of Connecticut
Storrs, CT,
US.}

\author{Zhou Wang}
\address{Department of Climate, Meteorology \& Atmospheric Sciences,
University of Illinois Urbana-Champaign,
Champaign, IL,
US.}

\author[Garrett {\it et al.}]{Bo Li}
\address{Department of Statistics and Data Science,
Washington University in St. Louis,
St. Louis, MO,
US.}

\begin{document}
\begin{abstract}
Climate model evaluation plays a crucial role in ensuring the accuracy of climatological predictions. However, existing statistical evaluation methods often overlook time misalignment of events in a system's evolution, which can lead to a failure in identifying specific model deficiencies. This issue is particularly relevant for climate variables that involve time-sensitive events such as the monsoon season. To more comprehensively evaluate climate fields, we introduce a new vector-valued metric, the sliced elastic distance, through kernel convolution-derived slices. This metric simultaneously and separately accounts for spatial and temporal variability by decomposing the total distance between model simulations and observational data into three components: amplitude differences, timing variability, and bias (translation). We use the sliced elastic distance to assess CMIP6 precipitation simulations against observational data, evaluating amplitude and phase distances at both global and regional scales. In addition, we conduct a detailed phase analysis of the Indian Summer Monsoon to quantify timing biases in the onset and retreat of the monsoon season across the CMIP6 models.
\end{abstract}
\keywords{Climate models; Functional data analysis; Spatiotemporal statistics; Kernel convolutions}

\newpage
\section{Introduction}
Climate models are essential tools for studying changes in the climate system in response to anthropogenic forcings \citep{kattenberg1996climate}. Scientists have continually refined these models to accurately represent the real dynamics of the climate systems. Many of the models introduced in the Coupled Model Intercomparison Project Phase 6 (CMIP6) represent significant advancements in simulation complexity and resolution \citep{eyring2016overview}. Model evaluation is a critical aspect of this improvement process, providing users with valuable insights into the model's strengths and limitations \citep{ye2023evaluation}. Consequently, developing comprehensive evaluation metrics is crucial for assessing the model's ability to capture the complex characteristics of climate \citep{randall2007climate, eyring2019taking}.

Climate model evaluation involves comparing model outputs with observational datasets \citep{flato2014evaluation}. This can be achieved through various approaches, including similarity measures, spatial data modeling, and functional data analysis (FDA) techniques. Widely used similarity measures include root mean square error (RMSE), mean absolute error (MAE), Taylor skill score, inter-annual variability score, and correlation coefficients \citep{li2022evaluation, yazdandoost2021evaluation, du2022comprehensive, ngoma2021evaluation}. Development in spatial statistics includes hypothesis testing for spatial data \citep{shen2002nonparametric, cressie2008detecting, yun2022detection}, quantifying loss differentials \citep{snell2000spatial, wang2007intraseasonal, hering2011comparing}, and examining the first- and second-order dependency structures of spatial processes \citep{lund2009revisiting, li2012defining}. 
FDA based approaches have been used to compare means \citep{ramsay2005fitting, zhang2007statistical, horvath2013estimation, staicu2014likelihood}, covariance structures \citep{zhang2015two, li2016comparison}, and distributional differences \citep{harris2021evaluating} by treating spatiotemporal processes as discrete representations of continuous functions.

Precipitation is one of the most challenging variables for climate models to represent due to its complex interaction with the climate system and the limitations of observational data products \citep{nelson2016assessment}. It is also one of the most important climate variables  \citep{trenberth2011changes} for its strong influence on agriculture, energy production \citep{ramachandra2007spatial} and many other economic sectors \citep{wang2006asian, prasanna2014impact}.
In particular, large-scale seasonal shifts in precipitation, such as those associated with monsoons, can severely disrupt agriculture and industry if they occur earlier or later than predicted by climate models \citep{ye2023origin}.
Despite this importance, most current evaluations of monsoonal precipitation in climate models have primarily focused on mean precipitation biases during the monsoon season \citep{katzenberger2021robust, konda2023evaluation, xin2020comparison}. More recently, some studies have begun to assess model biases in the timing of monsoon onset and retreat \citep{ye2023origin, khadka2022evaluation, ha2020future}. Motivated by these efforts, our work seeks to rigorously quantify timing biases in monsoonal precipitation as represented in climate models.

Although climate model validation methods have advanced in recent years, no existing approach fully disentangles the timing of events from their magnitude, which can result in imprecise estimates of timing, intensity, and variability \citep{tucker2013generative, srivastava2016functional}.
Therefore, we introduce a climate model evaluation metric, sliced elastic distance, which fully isolates the two sources of variability to comprehensively and rigorously assess model performance.
Our approach extends the elastic shape metrics \citep{srivastava2011registration} to spatiotemporal processes on a sphere via spherical convolutional slicing \citep{garrett2024validating}.
Convolutional slicing allows to locally align spatiotemporal fields continuously over the globe which, we show, leads to a proper vector-valued metric \citep{sastry2012common}. Furthermore, local alignments lead to local distance maps \citep{garrett2024validating} that can be used to study spatial variations in phase-amplitude variability. We apply our method to rank CMIP6 daily precipitation simulations using observational data as a reference, and then map the local onset and retreat biases of the Indian Summer Monsoon in each model. 

The remainder of the article is organized as follows. Section \ref{sec:data} describes the various precipitation datasets used in our analysis. Section \ref{sec:method} proposes the sliced elastic distance and establishes its theoretical properties. Section \ref{sec:sim} demonstrates the performance of the sliced elastic distance through simulated model validation scenarios. Section \ref{sec:results} applies our distance metric to evaluate CMIP6 precipitation models in terms of  amplitude and phase differences on both global and local scales. Finally, Section \ref{sec:discussion} provides a brief discussion of the method and the results.

\section{Data Description}
\label{sec:data}

We obtain precipitation outputs from the CMIP6 historical experiments \citep{ESGF_LLNL}, to assess the global and local performance of climate models. Daily precipitation data (mm) was chosen to provide the necessary temporal resolution for accurate assessment of monsoon timing, consistent with the approach of \cite{misra2018local}.
We use an ensemble of 45 model outputs (Appendix \ref{a:data1}), each run under the \texttt{r1i1p1f1} variant ID, representing a unique simulation of historical precipitation fields from a different model using the same initialization, physics, and forcings \citep{eyring2016overview}. 

We use the National Centers for Environmental Information (NCEI) Global Precipitation Climatology Project (GPCP) Daily Precipitation Analysis Climate Data Record \citep{huffman2001global,adler2020global} as the reference for climate model evaluation. Data were obtained from the GPCP Daily V1.3 analysis publicly available on the NCEI website \citep{NCEI_Precipitation_GPCP}.
We also include precipitation fields from the European Centre for Medium-Range Weather Forecasts (ECMWF) Reanalysis 5th Generation (ERA5) \citep{hersbach2020era5, hersbach_era5_2023} and the National Centers for Environmental Prediction (NCEP) Reanalysis-2 dataset \citep{kanamitsu2002ncep} as points of comparison given their well known biases \citep{nelson2016assessment, zhou2022development}. The ERA5 hourly data \citep{Copernicus_ERA5} were aggregated to a daily frequency to match the NCEP Reanalysis-2 \citep{NOAA_NCEP_Reanalysis2} data.

Each data product provides precipitation values on a regular latitude-longitude grid with a variable size and structure, with the exception of the ICON-ESM-LR climate model. We evaluate each of the 45 CMIP6 models over the historical period from January 1997 to December 2014, a common period covered by GPCP, ERA5, NCEP, and the models.

\section{Methods}
\label{sec:method}

\subsection{Review of Elastic functional data analysis}
\label{sec:efda}

Elastic functional data analysis (EFDA) \citep{joshi2007novel, tucker2013generative, srivastava2016functional} is a general framework for comparing the shapes of absolutely continuous manifold-valued functions parameterized by time.
Because our method will warp univariate projections of spatiotemporal processes, we introduce the EFDA framework for special case of univariate functional data.

Let $\mathcal{F_T}$ denote the space of absolutely continuous functions from $\mathcal{T} \mapsto \mathbb{R}$, where $\mathcal{T} = [0, 1]$ without loss of generality and let $f, g \in \mathcal{F_T}$ denote two functions. To compare the intrinsic shape differences between $f$ and $g$, EFDA introduces a square root velocity function (SRVF) representation and warps the SRVF of $f$ to the SRVF of $g$ \citep{joshi2007novel}. The SRVF of a function $f \in \mathcal{F_T}$ is defined as $q_{f(t)} = \text{sign}(\dot{f}(t))\sqrt{|\dot{f}(t)|}$ \citep{srivastava2016functional}, where $\dot{f}(t)$ denotes the time derivative of $f(t)$. The SRVF provides a unique and invertible representation of $f$, up to vertical translation.
Warping is accomplished by estimating warping functions $\gamma_f, \gamma_g \in \Gamma$  that minimize $||q_{f(\gamma_f(t))} - q_{g(\gamma_g(t))}||_2$. The space $\Gamma$ includes all boundary-preserving, absolutely continuous, and weakly increasing functions on $\mathcal{T}$.

Following \cite{srivastava2016functional}, we the define amplitude, phase, and translation distances, denoted $D_A$, $D_P$, and $D_T$, respectively, between $f$ and $g$ as:
\begin{equation}\label{eq:3distances}
\begin{aligned}
  D_A(f,g) &= \text{inf}_{\gamma_f,\gamma_g \in \Gamma}||(q_f,\gamma_f) - (q_g,\gamma_g)||_2, \\
  D_P(f,g) &= \text{cos}^{-1}\left(\int_0^1\sqrt{\dot\gamma_f^*(t)}\sqrt{\dot\gamma_g^*(t)}dt\right), \\
  D_T(f,g) &= |f(0)-g(0)|,
\end{aligned}
\end{equation}
where 
$(q_f,\gamma) = q_{f(\gamma(t))} = (q_f(\gamma(t)))\sqrt{\dot\gamma}$ is the SRVF of $f(\gamma(t))$, the time warping of $f$ by $\gamma$, and $\dot\gamma$ is the time derivative of $\gamma$. $D_P(f,g)$ quantifies the degree of time warping required to align $f$ and $g$ in time, and, therefore, represents timing differences between $f$ and $g$ regardless of their individual magnitudes. $D_A(f,g)$ quantifies the residual magnitude difference between $f$ and $g$ after aligning them in time. Finally, $D_T(f,g)$ quantifies the bias between $f$ and $g$, irrespective of their timing or magnitude.

The distance functions $D_A$, $D_P$, and $D_T$ derived from the SRVF representation are all proper metrics on their respective spaces \citep{srivastava2016functional}. 
The triplet of amplitude, phase, and translation distance together is called the elastic distance, denoted as:
\begin{equation}
  \label{d:elastic}
  D_E(f,g) = \Bigl[D_A(f,g), D_P(f,g), D_T(f,g)\Bigr]^T,
\end{equation}
where $x^T$ is the transpose of a row vector $x$. EFDA aims to estimate the elastic distance between functions, which fully characterizes their differences, while allowing for meaningful separation of variability. We prove a result of independent interest that $D_E$ is a vector-valued metric \citep{sastry2012common} on $\mathcal{F_T}$ in Section \ref{sec:sedtheory}.

\subsection{Sliced elastic distance}
\label{sec:sed}

To develop our approach, we re-cast spatiotemporal climate fields as continuous functions of space-time. Let $\mathcal{S} = \mathbb{S}^2$, the unit sphere, denote the spatial domain and $\mathcal{T} = [0, 1]$ denote the time domain. Each spatiotemporal field is a continuous function $f(s,t) \in \mathcal{F_{\mathcal S\times T}}$, where $\mathcal{F_{\mathcal S\times T}}$ is the set of continuous functions $f: \mathcal S\times \mathcal T \mapsto \mathbb{R}$, such that, for any fixed location $s_0$, the component function $f(s_0,t)$ is also an absolutely continuous function of time. 

Drawing from the sliced Wasserstein distance \citep{garrett2024validating} for functional data, we introduce a sliced \textit{elastic} distance. Slices are random \citep{cuevas2007robust} or deterministic \citep{delaigle2019clustering}  projections that map functional data to scalars or low-dimensional vectors. Following \cite{garrett2024validating}, we propose to use deterministic spherical convolutional slices. Convolutional slicing integrates $f \in \mathcal{F_{\mathcal S\times T}}$ against a kernel function centered at location $s \in \mathcal{S}$, $k_s : \mathcal{S} \mapsto \mathbb{R}$ to generate a univariate function $f_s(t)$ as
\begin{equation}\label{eq:slice}
f_s(t) = \int_{\mathcal{S}} f(u,t) k_s(u; \theta) \, du,
\end{equation}
where $k_s$ is any spatially continuous function with a positive spectral density on $\mathcal{S}$ and $\theta \in \Theta$ denotes the parameters (bandwidth) of the kernel. We specify a Wendland function \citep{wendland1998error} to provide compactness, see Section \ref{sec:tb} for further discussion of our kernel choice.
Because this kernel is only supported in a neighborhood $B_\theta(s) \subset \mathcal{S}$ centered around $s \in \mathcal{S}$, the sliced function $f_s(t)$ represents a local composite of all nearby component functions of $f$. 
Given another space-time function $g(u, t) \in \mathcal{F}_{S \times T}$, we can integrate $g$ against the same kernel to generate 
\begin{equation}\label{eq:slice2}
g_s(t) = \int_{\mathcal{S}} g(u,t) k_s(u; \theta) \, du,
\end{equation}
which, analogously, is a local composite of the nearby component functions of $g$. 

Given the univariate sliced functions $f_s(t)$ and $g_s(t)$, we can immediately compute the elastic distance (\ref{d:elastic}) between them, which we call the local sliced elastic distance. We then define a global sliced elastic distance, denoted as the sliced elastic distance, as an average over all local sliced elastic distances, i.e., for all $s \in \mathcal{S}$.
To simplify notation, we will henceforth suppress the $(s,t)$ in $f(s,t)$ when there is no risk of confusion.

\begin{defn}[Sliced elastic distance]
\label{d:sed}
\label{d:fs}
Let $f,g\in\mathcal{F}_{\mathcal S\times \mathcal T}$. We define the sliced elastic distance, $D_{SE}$,  between $f$ and $g$ to be the vector consisting of three components: sliced amplitude distance $(D_{SA})$, sliced phase distance $(D_{SP})$, and sliced translation distance $(D_{ST})$:
\begin{equation}
  D_{SE}(f,g) 
  = \begin{bmatrix}D_{SA}(f,g) \\ D_{SP}(f,g) \\ D_{ST}(f,g)\end{bmatrix}
  = \begin{bmatrix}\Bigl\{\int_{\mathcal S}D_A(f_{s},g_{s})^2d{s}\Bigr\}^{1/2} \\ \Bigl\{\int_{\mathcal S}D_P(f_{s},g_{s})^2d{s}\Bigr\}^{1/2} \\ \Bigl\{\int_{\mathcal S}D_T(f_{s},g_{s})^2d{s}\Bigr\}^{1/2}\end{bmatrix},
\end{equation}
where $f_{s}$ and $g_{s}$ are the slices defined in (\ref{eq:slice}) and $D_A$, $D_P$, and $D_T$ are the amplitude, phase, and translation distances between two univariate functions as defined in (\ref{eq:3distances}). 
\end{defn} 

Since $D_{SA}$ is calculated via the amplitude distance between slices $f_s$ and $g_s$ at each location $s\in \mathcal S$, the time warping step is allowed to vary over space.
According to (\ref{eq:3distances}), the amplitude distance is based on the derivative of each slice through the SRVF representation. Thus, $D_{SA}$ integrates local differences between the dynamics of two climate fields in the vertical direction.
The phase distance between $f_s$ and $g_s$ represents the severity of the warping required to align the SRVFs of $f_{s}$ and $g_{s}$. In the context of comparing climate fields, $D_P(f_{s},g_{s})$ quantifies local differences in the timing of weather events or seasonal changes at location $s$, and $D_{SP}(f,g)$ measures the average time misalignment between the two climate fields. 
Sliced translation distance only captures differences between two functions at $t=0$. Though it is a necessary component to ensure that the sliced elastic distance satisfies the properties of a vector-valued metric, it is of little scientific interest for climate field comparison. In practice, we can substitute the sliced translation distance by a different measure of bias that is more of interest, such as bias in the annual mean or over a particular season of interest. We focus on $D_{SA}$ and $D_{SP}$ in our study.   

\begin{figure}[h]
\centering
\includegraphics[width=\linewidth]{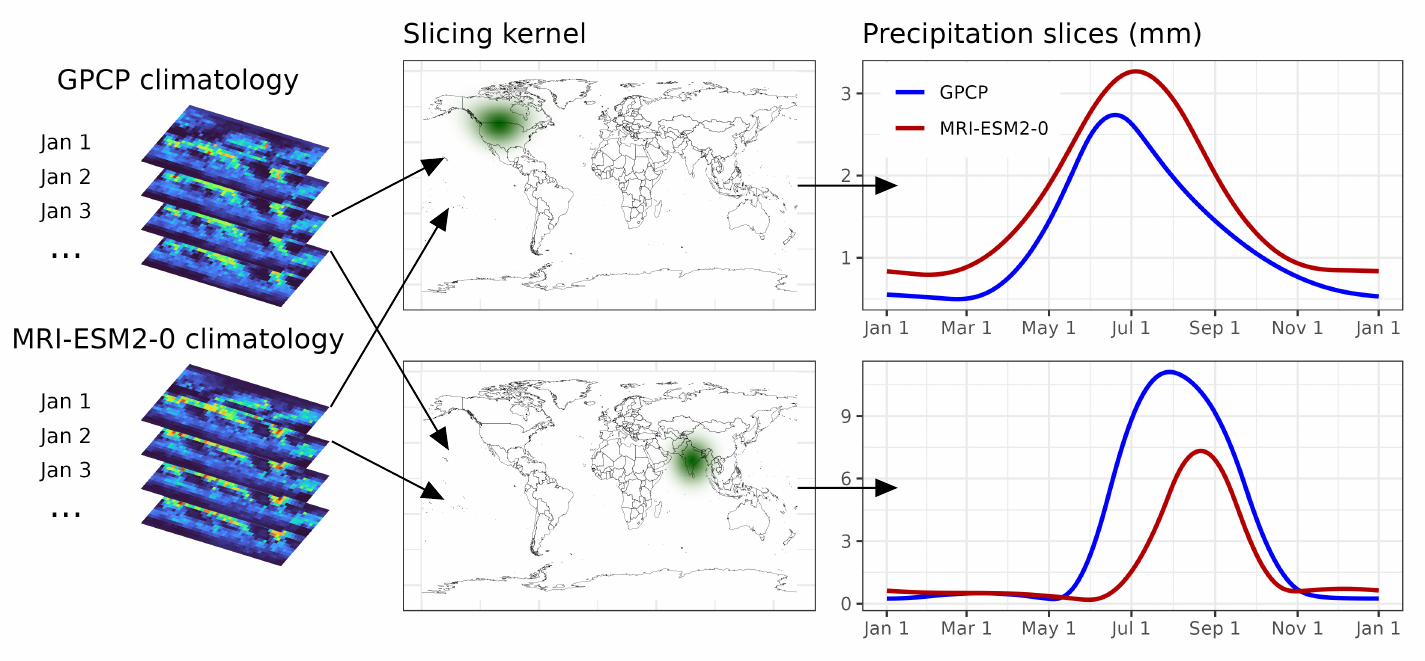}
\caption{An example of constructing the slice functions for GPCP and one CMIP6 model, MRI-ESM2-0. Each day, the spatial fields from each dataset are converted into many univariate functions, or slice functions, through a set of kernel projections. The two plots in the right panel show the slice functions at  two locations where the kernels in the central panel are centered.
We compute the elastic distance for each pair of slice functions using (\ref{eq:3distances}), and the sliced elastic distance between the two climate fields is obtained by averaging the elastic distances for each pair of slice functions.}
\label{f:diagram_sed}
\end{figure}

The sliced elastic distance allows us to compare high-dimensional spatiotemporal data through one-dimensional slices indexed by spatial locations. 
Each slice represents one perspective of the data and specifically captures the features of the data at a given location via the kernel convolution in (\ref{eq:slice}). 
Together, the slices provide a comprehensive view of the spatiotemporal field.
Figure \ref{f:diagram_sed} illustrates the concept of climate field comparison using the sliced elastic distance. 
The kernel convolution approach facilitates the comparison of climate model output and observational data that are available at different spatial resolutions.
The amplitude and phase distances for each slice are computed using the standard dynamic programming algorithm \citep{joshi2007novel}, thus addressing computational concerns associated with the time alignment step for multivariate functional data \citep{bernal2021computing, tucker2022dimensionality, hartman2021supervised}.  
Details on the computation of the sliced elastic distance are provided in Appendix \ref{a:computation}.

Climate data can also be viewed as multivariate functions of time, where the dimensionality of the multivariate function corresponds to the number of spatial locations. Under this framework, we could directly apply traditional EFDA methods to multivariate functional data \citep{joshi2007novel, srivastava2016functional}. 
However, this approach would be severely limited by the fact that warping function would not vary over space, and thus not allow us to characterize spatially varying timing biases  \citep{bernal2021computing, tucker2022dimensionality, hartman2021supervised}. Because climate data exhibit strongly spatially varying phase characteristics, implementing such an approach would result in the loss of much of the relevant information.
In contrast, our approach easily allows for spatial aggregation and spatially varying phase characteristics through the action of the kernel function.

\subsection{Theoretical properties}\label{sec:sedtheory}
We first show that the elastic distance $D_E(f, g)$ defined in (\ref{d:elastic}) is a valid vector-valued metric on $\mathcal F_\mathcal T$ and then establish our main result that the sliced elastic distance $D_{SE}(f, g)$ is a valid vector-valued metric on $\mathcal F_{\mathcal S\times \mathcal T}$.   
Both results rely on the idea of vector-valued metric spaces \citep{sastry2012common, rao2015vector, jachymski2016around},
which generalizes the concept of metric spaces to allow for multiple distance functions. 

\cite{jachymski2016around} introduced a simple characterization of vector-valued metrics which states that $(X,D)$ is a vector-valued metric space if and only if $D$ is a family of pseudometrics $(D_1,...,D_m)$ such that for any $x,y \in X$, $x\neq y \Rightarrow D_i(x,y) > 0$ for some $i\in{1,...,m}$. This characterization fits naturally with the EFDA representation as a vector of three pseudo-metrics: amplitude, phase, and translation.
Each is a valid metric on a different space, but not on the original functional data space, $\mathcal{F_T}$.

We use the characterization of vector-valued metrics from \cite{jachymski2016around} along with properties from \cite{srivastava2016functional} to prove that $D_E(f, g)$ is a vector-valued metric (Lemma \ref{l:elastic} stated in Appendix \ref{a:VVMSD}) as well as Theorem \ref{t:VVMSD}, a general result extending sliced vector-valued metrics to global vector-valued metrics.

\begin{thm}
\label{t:VVMSD}
If $D=(D_1,...,D_m)^T$ is a vector-valued metric on $\mathcal{F_T}$, and $f_s(t)$ and $g_s(t)$ are respectively the slice functions of $f(u,t)\in\mathcal{F_{S\times T}}$ and $g(u,t)\in\mathcal{F_{S\times T}}$ using a spatially continuous kernel $k(u;\theta)$ with a positive spectral density on spherical domain $\mathcal S\in \mathbb S^2$ as defined in (\ref{eq:slice}), then the vector-valued function $D_S = (D_{S1},...,D_{Sm})^T$ with  each component defined as \begin{equation*}
    D_{Si}(f,g) = \left\{\int_{\mathcal S}D_i\left(f_{s},g_{s}\right)^2 ds \right\}^{1/2}, \quad i=1,...,m,
\end{equation*}
is a vector-valued metric on $\mathcal F_{\mathcal S\times \mathcal T}$.
\end{thm}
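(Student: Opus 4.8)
The plan is to verify the two conditions in the characterization of vector-valued metrics due to \cite{jachymski2016around}: first, that each coordinate $D_{Si}$ is a pseudometric on $\mathcal F_{\mathcal S\times \mathcal T}$, and second, that the family $(D_{S1},\dots,D_{Sm})$ jointly separates points, i.e.\ that $D_{Si}(f,g)=0$ for every $i$ forces $f=g$. Throughout I would regard $D_{Si}(f,g)$ as the $L^2(\mathcal S)$ norm of the nonnegative function $s\mapsto D_i(f_s,g_s)$, which makes the analytic structure transparent.

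For the first condition, nonnegativity and symmetry are immediate from the corresponding properties of $D_i$, and $D_{Si}(f,f)=0$ follows because $D_i(f_s,f_s)=0$ for every $s$. The only nonroutine property is the triangle inequality. For each fixed $s$ the pseudometric property of $D_i$ gives the pointwise bound $D_i(f_s,h_s)\le D_i(f_s,g_s)+D_i(g_s,h_s)$; since all three quantities are nonnegative, monotonicity of the $L^2$ norm together with Minkowski's inequality yields
\begin{equation*}
D_{Si}(f,h)=\|D_i(f_\cdot,h_\cdot)\|_{L^2(\mathcal S)}\le \|D_i(f_\cdot,g_\cdot)+D_i(g_\cdot,h_\cdot)\|_{L^2(\mathcal S)}\le D_{Si}(f,g)+D_{Si}(g,h).
\end{equation*}
Before this step I would record that $s\mapsto f_s$ is continuous into $\mathcal F_{\mathcal T}$ --- a consequence of the spatial continuity and compact support of the Wendland kernel $k_s$ --- so that $s\mapsto D_i(f_s,g_s)$ is continuous, hence measurable, and the integrals defining $D_{Si}$ are well defined.

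For the separation condition I would argue by contraposition. Suppose $D_{Si}(f,g)=0$ for all $i$. Because the integrand is nonnegative and, by the continuity just noted, continuous, $\int_{\mathcal S}D_i(f_s,g_s)^2\,ds=0$ forces $D_i(f_s,g_s)=0$ for every $s\in\mathcal S$ and every $i$. Fixing $s$ and invoking the hypothesis that $D=(D_1,\dots,D_m)$ is a vector-valued metric on $\mathcal F_{\mathcal T}$, the joint separation property there gives $f_s=g_s$. Thus the two fields have identical slices at every location: $\int_{\mathcal S}\bigl(f(u,t)-g(u,t)\bigr)k_s(u;\theta)\,du=0$ for all $s\in\mathcal S$ and $t\in\mathcal T$.

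It remains to conclude $f=g$, and this injectivity of the convolutional slicing operator is the crux of the argument. For each fixed $t$, the displayed identity states that the spherical convolution of $f(\cdot,t)-g(\cdot,t)$ against the kernel vanishes identically. Expanding in spherical harmonics and applying the convolution (Funk--Hecke) theorem, the coefficients of the convolution are the products of those of $f(\cdot,t)-g(\cdot,t)$ with the spectral coefficients of the kernel; since the kernel has a strictly positive spectral density, no coefficient is annihilated, so $f(\cdot,t)=g(\cdot,t)$ for every $t$, i.e.\ $f=g$. This is exactly the injectivity established for spherical convolutional slicing in \cite{garrett2024validating}, and I expect this to be the main obstacle: the positive spectral density assumption is precisely what the argument cannot dispense with, which is why that hypothesis appears in the statement.
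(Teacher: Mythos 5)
Your overall route is the same as the paper's: you invoke the \cite{jachymski2016around} characterization, prove each $D_{Si}$ is a pseudometric with the triangle inequality via the pointwise bound plus Minkowski (this is exactly the paper's Lemma \ref{l:pseudoSD}), and establish joint separation by reducing to injectivity of the spherical convolution, killed coefficient-by-coefficient using the convolution theorem of \cite{driscoll1994computing} and the positive spectral density of the kernel. That endgame is precisely the paper's. However, there is one genuine gap in how you get from the vanishing integrals to vanishing slices, and one well-definedness step you assume rather than prove.

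The gap is your claim that $s\mapsto D_i(f_s,g_s)$ is continuous. The theorem hypothesizes only that $D=(D_1,\dots,D_m)$ is a vector-valued metric on $\mathcal{F_T}$; nothing guarantees that any $D_i$ is continuous with respect to a topology in which $s\mapsto f_s$ is continuous. Even in the motivating elastic case this is delicate: $D_A$ acts through the SRVF, which involves time derivatives, so sup-norm continuity of the slices in $s$ (which is what spatial continuity and compactness of the Wendland kernel plausibly deliver) does not control $D_A(f_s,g_s)$. You lean on this continuity twice — for measurability of the integrand, and to upgrade $\int_{\mathcal S}D_i(f_s,g_s)^2\,ds=0$ to $D_i(f_s,g_s)=0$ for \emph{every} $s$. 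The paper avoids the claim entirely: it concludes only that $D_i(f_s,g_s)=0$ for almost every $s$, hence $f_s=g_s$ almost everywhere, and then transfers the continuity argument to the spatial convolution $c_{h,t}(s)=\int_{\mathcal S}h(u,t)k_s(u;\theta)\,du$ with $h=f-g$, which \emph{is} legitimately continuous in $s$ as a convolution of continuous functions on the sphere; a.e.\ vanishing plus continuity then gives $c_{h,t}\equiv 0$ before the harmonic expansion. Your argument is repairable along the same lines (indeed, a.e.\ vanishing of $c_{h,t}$ already annihilates its spherical harmonic coefficients, since they are integrals), but as written the "for every $s$" step is unjustified. Separately, your proposal never verifies that the slices $f_s$ actually lie in $\mathcal{F_T}$, i.e., are absolutely continuous in time, so that $D_i(f_s,g_s)$ is defined at all; your remark that $s\mapsto f_s$ is continuous into $\mathcal{F_T}$ presupposes this membership. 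The paper devotes Lemma \ref{l:slice} — a contradiction argument via Cauchy--Schwarz — to exactly this point, and some such argument is needed.
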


The proof of Theorem \ref{t:VVMSD} is provided in Appendix \ref{a:VVMSD}. This theorem demonstrates that spherical convolutional slicing enables the properties of any metric or vector-valued metric on $\mathcal{F_T}$ to extend to the space $\mathcal{F_{S\times T}}$. 
Corollary \ref{c:SED} follows directly from Lemma \ref{l:elastic} and Theorem \ref{t:VVMSD} and establishes the sliced elastic distance (\ref{d:sed}) as a valid vector-valued metric.
 
\begin{cor}
\label{c:SED}
The sliced elastic distance $D_{SE}(f(s,t),g(s,t))$ is a vector-valued metric on $\mathcal{F_{S\times T}}$.
\end{cor}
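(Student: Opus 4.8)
The plan is to obtain the corollary as an immediate specialization of Theorem~\ref{t:VVMSD} to the elastic distance, with Lemma~\ref{l:elastic} supplying the required input. All of the substantive content already resides in those two results, so the corollary reduces to verifying that their hypotheses are met and that the notation of Definition~\ref{d:sed} lines up with the general construction.

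First I would invoke Lemma~\ref{l:elastic}, which asserts that the elastic distance $D_E = (D_A, D_P, D_T)^T$ is a vector-valued metric on $\mathcal{F_T}$. This furnishes a vector-valued metric $D = (D_1, D_2, D_3)^T$ on $\mathcal{F_T}$ with $m = 3$ and $(D_1, D_2, D_3) = (D_A, D_P, D_T)$, which is exactly the object the hypothesis of Theorem~\ref{t:VVMSD} requires as input.

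Next I would confirm that the slicing construction underlying Definition~\ref{d:sed} coincides with the general construction in Theorem~\ref{t:VVMSD}. Comparing the two, each component of $D_{SE}$ has the form $D_{Si}(f,g) = \{\int_{\mathcal S} D_i(f_s, g_s)^2\,ds\}^{1/2}$, so that $D_{SA}, D_{SP}, D_{ST}$ are precisely $D_{S1}, D_{S2}, D_{S3}$ once $D_i$ is identified with $D_A, D_P, D_T$. The slices $f_s, g_s$ in both places are generated by (\ref{eq:slice}) using the Wendland kernel specified in Section~\ref{sec:sed}, which is spatially continuous with positive spectral density on $\mathbb{S}^2$; this is exactly the kernel hypothesis demanded by Theorem~\ref{t:VVMSD}.

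With the input metric supplied by Lemma~\ref{l:elastic} and the kernel hypothesis verified, Theorem~\ref{t:VVMSD} applies directly and yields that $D_S = (D_{S1}, D_{S2}, D_{S3})^T = D_{SE}$ is a vector-valued metric on $\mathcal{F_{S\times T}}$, which is the claim. Since the corollary is a pure specialization, there is no genuine obstacle beyond this bookkeeping; the real difficulty has already been dispatched in proving Theorem~\ref{t:VVMSD} — in particular, showing that the $L^2$-over-$\mathcal S$ aggregation preserves the defining separation axiom of a vector-valued metric, namely that $f \neq g$ forces $D_{Si}(f,g) > 0$ for some $i$, which relies on the injectivity of the slicing map guaranteed by the positive spectral density of the kernel.
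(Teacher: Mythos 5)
Your proposal is correct and takes essentially the same route as the paper, which presents Corollary \ref{c:SED} as following directly from Lemma \ref{l:elastic} and Theorem \ref{t:VVMSD}, with the kernel hypothesis satisfied because the slicing kernel in (\ref{eq:slice}) is spatially continuous with a positive spectral density on $\mathbb{S}^2$. Your identification of $(D_1,D_2,D_3)=(D_A,D_P,D_T)$ and the check that Definition \ref{d:sed} matches the general sliced construction is exactly the bookkeeping the paper intends, with the substantive work residing in the lemma and theorem as you note.
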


Because the sliced elastic distance is a vector-valued metric on $\mathcal{F}_{\mathcal{S} \times \mathcal{T}}$, its components - the sliced amplitude, phase, and translation distances - provide a comprehensive suite of metrics for comparing spatiotemporal fields.
While our theoretical results are shown for spatial data with a spherical domain, similar results can be proven for any spatial domain on which a comparable convolution theorem to the spherical case \citep[introduced in][]{driscoll1994computing} is available.

\subsection{Timing bias and kernel function}
\label{sec:tb}
The sliced elastic distance between two functions, $f(u,t)$ and $g(u,t)$, is calculated as the global mean of the local elastic distances between the localized sliced functions $f_s$ and $g_s$, for $s \in \mathcal{S}$. 
The local elastic distances themselves can also serve as informative intermediate output of the procedure by indicating regions of severe misalignment.
As discussed in Appendix \ref{a:computation}, the code implementation for calculating $D_A$ and $D_P$ assumes that $\gamma^*_{f_s} = I(t)$ for identifiability, where $I(t) = t, t \in \mathcal{T}$ is the identity warping function \citep{tucker2013generative}. Under this assumption, no time warping is applied to $f_s$, so $\gamma^*_{g_s}(t)$ represents the optimal warping function that aligns $g_s$ with $f_s$. Therefore, for a given time $t \in \mathcal{T}$, we can calculate the timing bias, denoted as $B(f_s,g_s;t)$, of $g_s$ relative to $f_s$ as follows:
\begin{equation}
\label{eq:timing}
    B(f_s,g_s;t) = \gamma^*_{g_s}(t) - I(t).
\end{equation}
In contrast to phase distance, timing bias focuses on a single user-specified time point. Additionally, timing bias can be either positive or negative, representing a late or early timing of an event in $g_s$ compared to $f_s$, respectively. In our application, it is natural to treat the observed precipitation data as $f$ and the climate model data as $g$. Thus, $B(f_s, g_s; t)$ quantifies the timing biases of events of interest in the climate model relative to the observed data.

Valid choices of kernel functions include the Kent distribution function \citep{kent1982fisher} and the generalized Wendland functions \citep{wendland1998error}. Compact kernels are desirable for our application because they facilitate efficient computations and ensure each slice represents climate features in a relatively small local neighborhood, though ultimately our method provides a global characterization of the misalignment between fields. We thus choose the Wendland kernel function as in \cite{nychka2015multiresolution}:
\begin{equation}\label{eq:wendland}
    k_s(u;r) = 
    \begin{cases}
        \left(1-\frac{|s-u|}{r}\right)^6\left(35\frac{|s-u|^2}{r^2}+18\frac{|s-u|}{r}+3\right)/3 & |s-u| \leq r, \\
        0 & |s-u| > r,
    \end{cases}
\end{equation} 
where the range parameter $r$ determines the compactness of the kernel. As long as $r$ is smaller than or equal to the diameter of the Earth, the Wendland functions will be positive definite on $\mathbb S^2$ \citep{hubbert2023generalised}, satisfying the assumptions in equation (\ref{eq:slice}). In our application, $|s-u|$ is the chordal distance between locations $s$ and $u$, represented as latitude-longitude coordinates. Chordal distance accounts for the spherical geometry of our data and is chosen instead of the great circle distance for theoretical convenience as shown in Appendix \ref{a:VVMSD}.

\section{Simulation}
\label{sec:sim}

We conduct numerical experiments to evaluate the skill of our method in separating amplitude and phase differences in spatiotemporal fields, investigate the sensitivity of our method to the choice of range parameter, and compare our method to a traditional precipitation evaluation method.

\subsection{Settings}
To make our simulation realistic, we convert the GPCP precipitation observations over 1997-2014 to climatologies by computing the mean at each location on each calendar day of the year, excluding leap days. These GPCP climatologies serve as the baseline data $f(s,t)$. We then generate $g(s,t)$ by applying a series of spatially-varying amplitude and phase modifications on $f(s,t)$. 
Let $a_i(s)$, $i = 1,2,3$ and $p_j(s)$, $j = 1,2,3$ be the sets of parameters used for the amplitude and phase modifications, and let ${g}_{i,j}(s,t)$ denote the modified versions of GPCP corresponding to parameters $a_i(s)$ and $p_j(s)$. We obtain $g_{i,j}(s,t)$ by transforming $f(s,t)$ for all $s\in\mathbb{S}^2$ as follows 
\begin{equation}\label{eq:sim1}
g_{i,j}(s,t) = a_i(s) f\left(s,t^{p_j(s)}\right).
\end{equation}
The amplitude parameter $a_i(s)$ acts as a multiplier on the scale of $f(s,t)$. We restrict $a_i(s)$ to values greater than 1, such that larger values of $a_i(s)$ correspond to increased precipitation relative to the original $f(s,t)$, thereby increasing the amplitude distance. The phase parameter $p_j(s)$, which is restricted to positive values, introduces time warping to adjust the seasonal timing of $f(s,t)$ at each location. Values greater than 1 introduce late timing biases, while values between 0 and 1 introduce early timing biases.

The parameter fields $a_i(s)$ and $p_j(s)$ are both generated as functions of latitude.
The amplitude parameters, $a_i(s)$, all start at a value of 1.1 at the South Pole and then increase linearly at different rates toward the North Pole, reaching values of 1.15, 1.2, and 1.25 for $i = 1, 2, 3$, respectively.
This modification magnifies amplitude everywhere and intensifies with latitude. 
The phase parameters, $p_j(s)$, vary exponentially with latitude according to the functions $1.2^{\text{lat}(s)/90}$, $1.4^{\text{lat}(s)/90}$, and $1.6^{\text{lat}(s)/90}$ for $j=1,2,$ and $3$, respectively, where $\text{lat}(s)$ is the latitude coordinate of location $s$ in degrees.
In the southern hemisphere, the latitude coordinates are negative leading to $p_j(s)<1$, so early timing biases are introduced in that region. On the contrary, the modifications introduce late timing biases in the northern hemisphere.
For both $a_i(s)$ and $p_j(s)$, the most extreme amplitude and phase modifications occur when $i=j=3$ and the least extreme modifications occur when $i=j=1$.

\begin{figure}[h]
\centering
\begin{minipage}{0.7\linewidth}
  \begin{subfigure}{\linewidth}
  \centering
  \includegraphics[width=\linewidth]{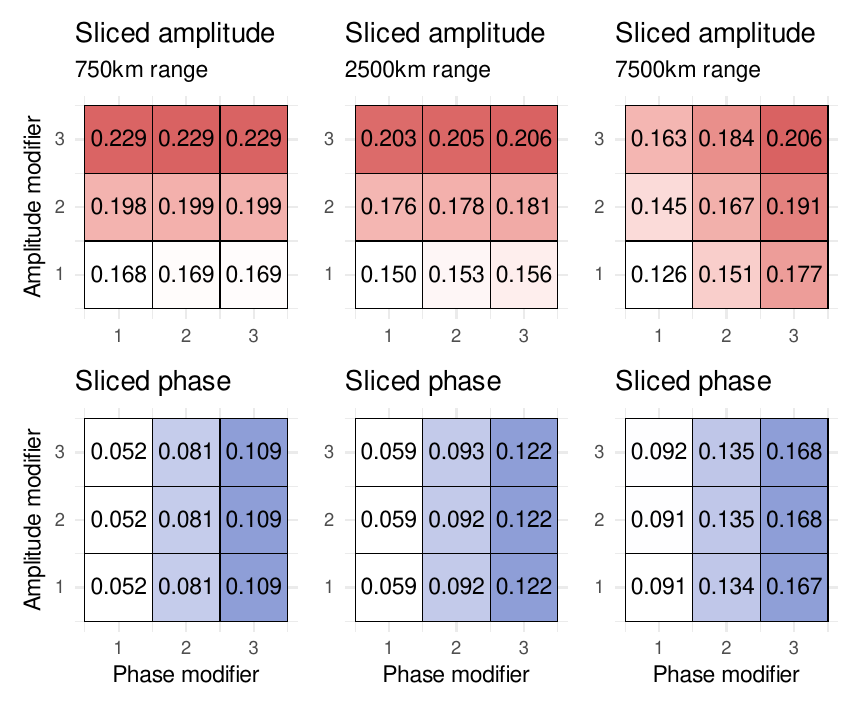}
  \caption{Sliced Elastic Distance Tables}
  \label{fig:sim1_range}
  \end{subfigure}
\end{minipage}
\hfill
\begin{minipage}{0.29\linewidth}
  \begin{subfigure}{\linewidth}
  \centering
  \includegraphics[width=\linewidth]{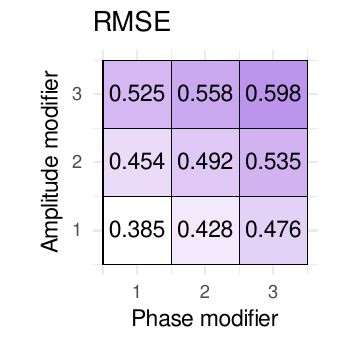}
  \caption{RMSE Table}
  \label{fig:sim1_rmse}
  \end{subfigure}
\end{minipage}
\vspace{2mm}
\caption{Panel (a): the sliced amplitude distance (red) and sliced phase distance (blue) between the original GPCP precipitation climatologies, $f(s,t)$, and the modified versions, $g_{i,j}(s,t)$, at various kernel range parameter values. Panel (b): the RMSE (purple) between the original and modified GPCP climatologies.  
For all plots, levels of the amplitude modifier $(i=1,2,3)$ and phase modifier $(j=1,2,3)$ are labeled in the $y$- and $x$-axes, respectively. 
Larger values of the amplitude and phase modifiers correspond to larger modifications made to $f(s,t)$. Color fill is determined independently per table, with lighter shades representing low distances and darker shades representing high distances.}
\label{fig:sim1_results}
\end{figure}

We calculate the sliced amplitude and sliced phase distance between $f(s,t)$ and each $g_{i,j}(s,t)$ following the algorithm described in Appendix \ref{a:computation}. To understand the influence of range parameter $r$ in the Wendland kernel function $k(u;r)$ on the elastic distances, we repeat the calculation for three different $r$ values in km: 750, 2500, and 7500. The first value, $r=750$ represents the value used in Section \ref{sec:results}. The remaining values represent potential choices for larger kernel ranges, but all values are less than the Earth's diameter (approximately 12,742km) to ensure positive definiteness of the kernel function.

\subsection{Results}
Simulation results are reported in Figure \ref{fig:sim1_results}. 
Overall, the sliced amplitude and sliced phase distance patterns show that our method is able to separate the spatially-varying amplitude differences from the spatially-varying phase differences. However, the range parameter values can affect this ability. 
For the range value of 750km, there is little to no influence of the phase modifier $p_j(s)$ on the sliced amplitude distance, evidenced by the very consistent amplitude distances at the three different phase modifier values. Vice versa, we also see no influence of the amplitude parameter $a_i(s)$ on the sliced phase distance values. 
Similar patterns are observed even if the range value increases to 2500km.
Whereas, for the range parameter of 7500km, the sliced amplitude distances increase with $p_j(s)$, showing evidence of entangled sliced amplitude and sliced phase distances.
This is because data that are further away in space have more distinct phase characteristics in our simulation.

Ignoring phase variability when taking functional means is known to distort the underlying structure \citep{tucker2013generative}. So, when using larger range parameter values, the kernel convolution used to create the slices acts as a cross-sectional weighted mean of misaligned functional data, leading to the entangled distances.  To more accurately quantify the amplitude and phase distances, we recommend choosing smaller range parameter values to decrease the influence of phase variability within the kernel radius. However, the grid size of data products must be considered in the choice of range parameter. If the range parameter is chosen to be significantly smaller than the grid size of the data, the slice functions may not contain the spatial information at the desired level.

Figure \ref{fig:sim1_results} (b) shows the RMSE calculated as the Euclidean norm between the original and modified climatologies. Unlike the sliced elastic distance, RMSE is unable to distinguish amplitude variability from phase variability, so RMSE provides distances that are essentially a joint reflection of the amplitude and phase modifications. 

\begin{figure}[h]
\centering
\includegraphics[width=1\linewidth]{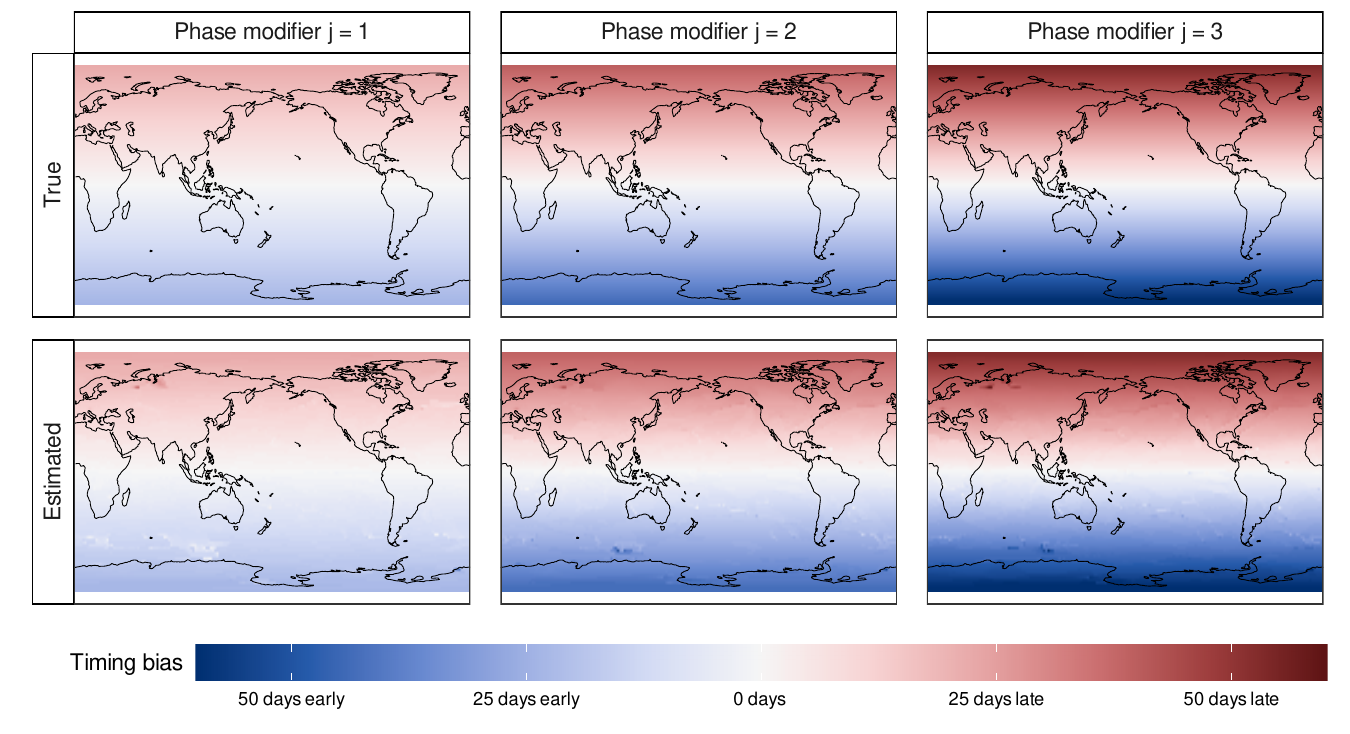}
\caption{Maps of timing bias on July 2 in the simulated data example for the three levels of the phase modifier $p_j(s), j=1,2,3$. The top row shows the true timing biases at each location in the modified GPCP climatologies. The bottom row shows the timing biases estimated using our sliced elastic distance method.
Slices are computed using the $750$km kernel, and timing biases are calculated using the formula in (\ref{eq:timing}). The color fill represents the magnitude and direction of timing biases.}
\label{fig:sim1_tb}
\end{figure}

In addition to quantifying the overall phase distance between two spatiotemporal fields, our method can also provide detailed information about timing biases at specific time points. We evaluate these intermediate results by comparing the estimated timing biases using the formula in (\ref{eq:timing}) to the true values in the modified GPCP climatologies incurred by the modifications in (\ref{eq:sim1}).
The top row of Figure \ref{fig:sim1_tb} shows maps of the true timing biases on July 2 (the midpoint of the calendar year), while the bottom row shows maps of the estimated values on the same day using the 750km kernel. These maps offer both the value and direction (early/late) of timing biases at each location for all three levels of phase modifications. 
In all cases, the fields are near-identical between the true and estimated values. The correct magnitude and direction of timing biases are recovered, with late biases near the north pole, early biases near the south pole, and no bias near the equator.

\section{Climate Model Evaluation}
\label{sec:results}

We first evaluate the skill of the CMIP6 climate models in reconstructing the historical climatologies on a global scale. 
We compute the sliced elastic distance between the daily precipitation fields of each CMIP6 model output and the GPCP observations. 
We also repeat this process for the ERA5 and NCEP datasets against GPCP to evaluate the performance of reanalysis fields, thus providing a baseline for climate model evaluation. 
We then demonstrate how to use the intermediate results from the sliced elastic distance calculation to quantify the timing bias for the onset and retreat of the Indian Summer Monsoon. We begin by illustrating the procedure at a single location, and then present the timing bias map across the entire monsoon region for a cohort of six CMIP6 models. 

Before computing the distance, raw precipitation values from each dataset are converted into daily climatology fields by taking the mean over 1997-2014 at each location on each calendar day of the year, excluding leap days. Although climate model evaluation typically focuses on monthly rather than daily data to eliminate ``weather" and retain only the ``climate", the temporal resolution of monthly data is often too coarse to accurately capture phase variability and timing biases. Therefore, we opt to use daily data, but instead of using the raw data directly, we will utilize a smoothed version of the daily data.
Quadratic trend filtering \citep{tibshirani2011solution} is designed to estimate the underlying continuous trend from noisy data. We apply this method to the daily climatology at each location to obtain a smooth and continuous function. This approach reduces the noise in the daily observations, making the resulting functional data comparable in smoothness to monthly data,  while preserving the temporal frequency of the daily data. 

For all sliced elastic distance computations, we follow the steps outlined in Appendix \ref{a:computation} and use the Wendland kernel function in (\ref{eq:wendland}) with a range of $r = 750$ km.
This range is large enough to cover more than a three-grid cell area near the equator in the lowest-resolution CMIP6 models (BCC-ESM1 and CanESM5, $128\times 64$ longitude-latitude resolution), ensuring that some degree of spatial smoothing is applied to all models. 
On the other hand, this choice is small enough to avoid issues with large range parameter values where sliced amplitude and sliced phase distances become entangled as demonstrated in Section \ref{sec:sim}.

\subsection{Global evaluation of CMIP6 precipitation models}

We create a 45-member ensemble of historical CMIP6 model outputs for daily total precipitation from January 1997 through December 2014, and compute the sliced elastic distance between the climatologies for each model and the GPCP data during this time period.
A smaller sliced phase distance indicates closer agreement in the timing of events and seasons, while a smaller sliced amplitude distance indicates that the model more closely matches the observations after phase alignment. As mentioned earlier, we also compute the distance from the ERA5 and NCEP reanalysis fields to GPCP. 

\begin{figure}[h]
\begin{center}
\includegraphics[width=1.04\linewidth]{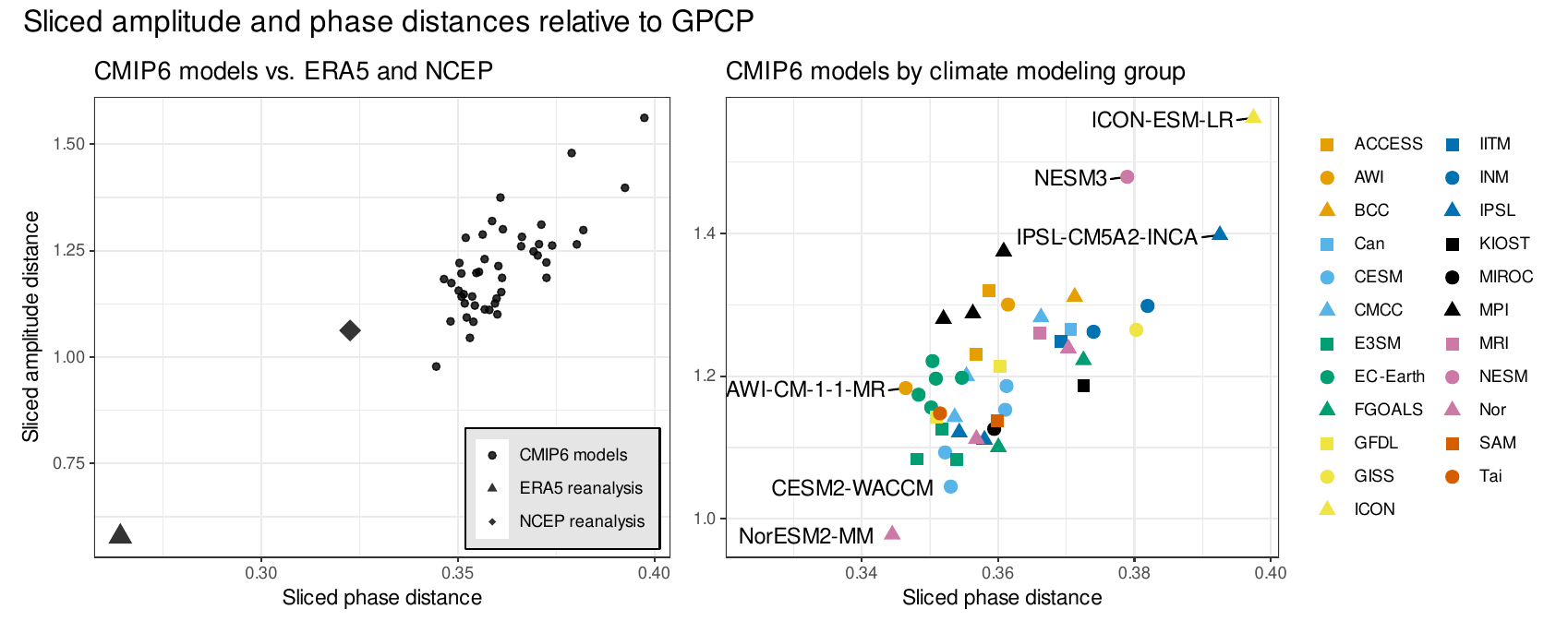}
\end{center}

\caption{Sliced elastic distance from the CMIP6 model outputs and ERA5/NCEP reanalyses to the GPCP observations. The left plot shows the CMIP6 models alongside ERA5 and NCEP. The right plot zooms in on the CMIP6 models only. In both plots, each model/reanalysis dataset is displayed as a point with the $x$ and $y$ axes representing the sliced phase and amplitude distances to GPCP, respectively.}\label{fig:cmip6_rankings}
\end{figure}

Figure \ref{fig:cmip6_rankings} shows that ERA5 has substantially lower sliced amplitude and phase distances to GPCP than all CMIP6 models and NCEP, indicating a higher degree of similarity to the observed data. NCEP has a lower sliced phase distance to GPCP than all CMIP6 models. However, two models, NorESM2-MM and CESM2-WACCM, have a lower sliced amplitude distance than NCEP to GPCP. Despite known issues with reanalysis datasets for precipitation \citep{tapiador2017global}, our method shows greater agreement between reanalysis datasets and observations than model outputs and observations, particularly in sliced phase distance. 

Among the CMIP6 models, the scatter plot shows evidence of a positive correlation between sliced amplitude and sliced phase distance. This implies that a model which performs well in one component (amplitude or phase) likely performs well in the other.
The Norwegian Earth System Model (NorESM2-MM) from the Norwegian Climate Center ranks the best, with the lowest sliced amplitude distance by a wide margin and the second-lowest sliced phase distance. The Community Earth System Model (CESM) Whole Atmosphere Community Climate Model configuration (CESM2-WACCM) ranks second for sliced amplitude distance, while the Alfred Wegener Institute Climate Model (AWI-CM-1-1-MR) achieves the lowest sliced phase distance, albeit by a small margin. The Energy Exascale Earth System Models (E3SM) exhibit a favorable balance between sliced amplitude and phase distances.

In contrast, the Icosahedral Non-hydrostatic Earth System Model (ICON-ESM-LR) shows the highest values for both sliced amplitude and phase distances. The Nanjing University of Information Science and Technology Earth System Model (NESM3) and the interactive aerosols/atmospheric chemistry configuration of the Institut Pierre-Simon Laplace coupled climate model (IPSL-CM5A2-INCA) have the second-highest sliced amplitude and sliced phase distances, respectively. Notably, for some modeling groups, such as E3SM and EC-Earth, the models within each group exhibit strong similarities in their sliced amplitude and phase distance values.

For  full details on our sliced elastic distance rankings (including sliced translation distance), refer to Table \ref{tab:full} in Appendix \ref{a:table}. This table also includes comparisons with two commonly used evaluation metrics for precipitation data: RMSE and MAE. Similar to the sliced amplitude and sliced phase distances, both RMSE and MAE rank ERA5 as the most similar to GPCP. However, both RMSE and MAE place NCEP among the average CMIP6 models. In most cases, the rankings for RMSE and MAE align closely with those of sliced amplitude distance, though there are some notable exceptions, such as the FGOALS models and ACCESS-CM2.
Sliced phase distance, however, provides a largely unique perspective on the rankings. Among all the different ranking methods, only sliced phase distance ranks NCEP above all the CMIP6 models. This is particularly significant, as time variability is an important feature in determining the similarity to observed data.

The sliced elastic distance provides a global assessment of a model's performance in mimicking the magnitude and timing of observed climatologies. However, it may also be desirable to identify where and when the differences between a model and observations occur.  Indeed, the intermediate results from calculating the sliced elastic distance can exactly reveal such information. We demonstrate this additional feature of our method in the context of Indian Summer Monsoon in the following section.

\subsection{Timing bias in the Indian Summer Monsoon region}
\label{sec:monsoon}

We characterize phase differences between GPCP and the CMIP6 model outputs by focusing on a key component of the global climate system: the Indian Summer Monsoon (ISM). Specifically, we examine the Monsoon Core Region (MCR), defined as the area of India from $15^\circ$N to $30^\circ$N latitude and $68^\circ$E to $88^\circ$E longitude. The MCR encompasses regions with the highest proportion of rainfall during the monsoon season, which is typically considered to span June, July, August, and September (JJAS), compared to the remaining eight months of the year.

\begin{figure}[h]
\centering
\includegraphics[width=\linewidth]{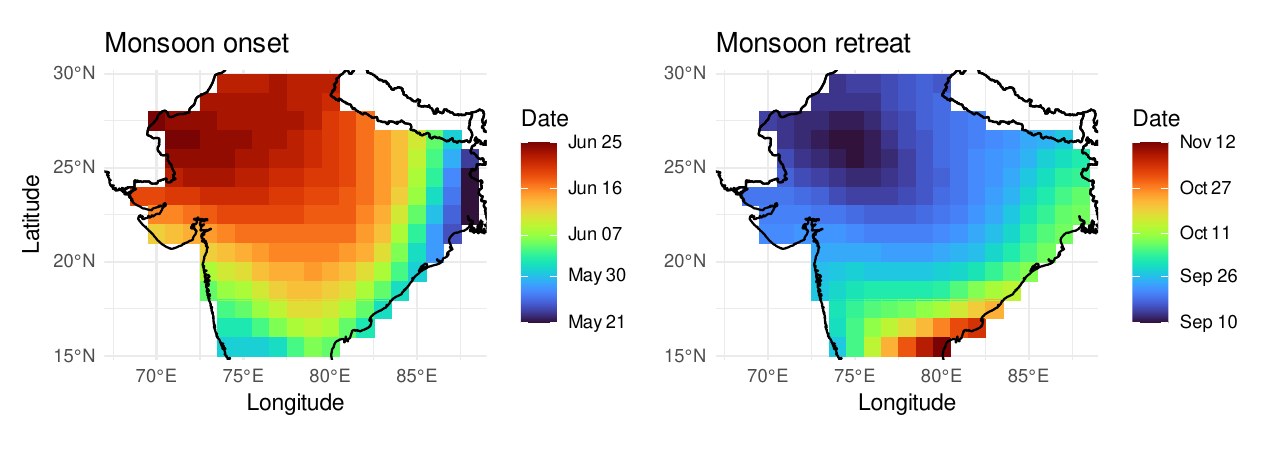}
\caption{Maps of ISM onset and retreat dates in the GPCP climatology slices.
The dates are determined for each location in the MCR using a 50\% threshold of maximum precipitation. Color fill is used to indicate the onset/retreat date.}
\label{fig:gpcp_dates}
\end{figure}

Given the profound socio-economic importance of the onset and retreat of the summer monsoon, we focus on quantifying the timing biases for these key events across the MCR. Various methods have been proposed for determining the onset and retreat of the monsoon season at each location based on climatological precipitation or other aspects of the hydrological cycle \citep{wang2002rainy, fasullo2003hydrological, misra2018local}. In this study, we adopt a similar approach, defining the onset and retreat based on thresholds of the maximum climatological precipitation at each location in the GPCP dataset. Specifically, the onset date at a location is the first day in the climatology when the precipitation exceeds 50\% of the maximum daily rainfall, while the retreat date is the last day when the precipitation exceeds 50\% of the maximum daily rainfall.

Figure \ref{fig:gpcp_dates} shows the onset and retreat dates for each slice location in the MCR. These maps are not intended to exactly reproduce previous results, but rather to provide a per-slice definition of the onset and retreat, which will serve as the reference for our phase analysis.
Compared to the maps of onset and retreat dates in \cite{misra2018local}, our results are overall very similar, but exhibit more spatial smoothness due to the kernel convolution in our approach and the more coarse precipitation observations in our data.

\begin{figure}[h]
\centering
\includegraphics[width=\linewidth]{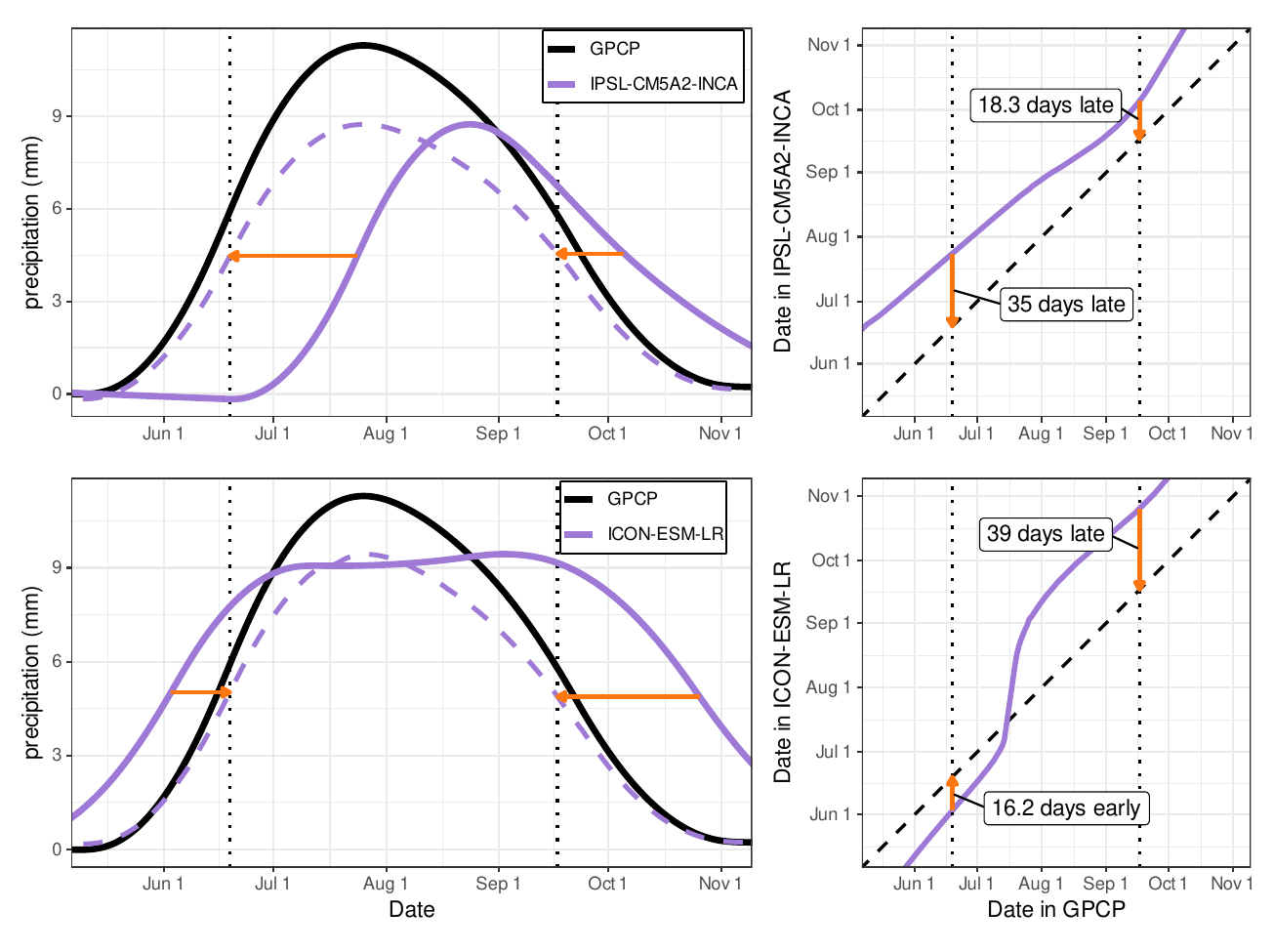}
\caption{
ISM onset/retreat timing biases between IPSL-CM5A2-INCA/ICON-ESM-LR and GPCP for the slice located at $22.5^\circ$N, $78^\circ$E. Orange arrows indicate the time warping applied to each model at the ISM onset (June 19th) and retreat (September 17th), which are displayed as dotted lines. Left-side plots show the model (purple) and GPCP (black) precipitation slices. Post-alignment model slices are shown as dashed lines. 
Right-side plots display the time warping functions for each model in purple against the identity function (dashed line).
}
\label{fig:efda}
\end{figure}

For a given location in the MCR, we can take the onset and retreat dates from the maps in Figure \ref{fig:gpcp_dates} and then compute the timing bias, defined in (\ref{eq:timing}), from the CMIP6 models to GPCP on those dates. Figure \ref{fig:efda} demonstrates the calculation procedure for IPSL-CM5A2-INCA and ICON-ESM-LR, the two models with the highest global sliced phase distance. This demonstration focuses on a single location in the middle of the MCR ($22.5^o$N, $78^o$E).
For IPSL-CM5A2-INCA, \cite{sepulchre2020ipsl} previously established a one-month lag in the ISM event. Their work focuses on monthly data while ours uses smoothed daily data, allowing us to determine timing biases at a finer temporal resolution. At this particular location, we find that IPSL-CM5A2-INCA exhibits a late onset bias of about 35 days and a late retreat bias of about 18 days compared to GPCP, so our results generally agree with \cite{sepulchre2020ipsl}. In contrast, ICON-ESM-LR exhibits an early onset bias of about 16 days and a late retreat bias of about 39 days.  

To understand the spatially-varying timing biases in the onset and retreat of the Indian Summer Monsoon, we repeat the procedure in Figure \ref{fig:efda} for every location in the MCR using a cohort of six CMIP6 models. These models were selected from those with the highest and lowest sliced elastic distances in Figure \ref{fig:cmip6_rankings}. The results are displayed as maps in Figure \ref{fig:tb_maps}. For the maps of onset timing bias, we observe that NorESM2-MM, AWI-CM-1-1-MR, and CESM2-WACCM exhibit little to no timing variability in the MCR. Among these, AWI-CM-1-1-MR shows timing biases closest to zero across the entire MCR. In contrast, ICON-ESM-LR, IPSL-CM5A2-INCA, and NESM3 show significant biases in some or all parts of the MCR. Notably, IPSL-CM5A2-INCA displays a strong positive timing bias of four or more weeks throughout the region.

For the maps of retreat timing bias, NorESM2-MM, AWI-CM-1-1-MR, and CESM2-WACCM all exhibit greater levels of timing bias compared to the onset event. For NorESM2-MM and AWI-CM-1-1-MR, there is a mix of small timing biases both in terms of early and late retreat, while CESM2-WACCM features mostly early or neutral timing biases. The remaining models exhibit mainly late retreat biases. For ICON-ESM-LR, late retreat biases occur in almost all areas of the MCR, while for IPSL-CM5A2-INCA and NESM3, late retreat biases occur most prominently at the western edge and center of the MCR. The large biases around the western edge in some models may be related to the representation of topography in those models.

\begin{figure}[h!]
\centering
\begin{subfigure}{\linewidth}
  \includegraphics[width=\linewidth]{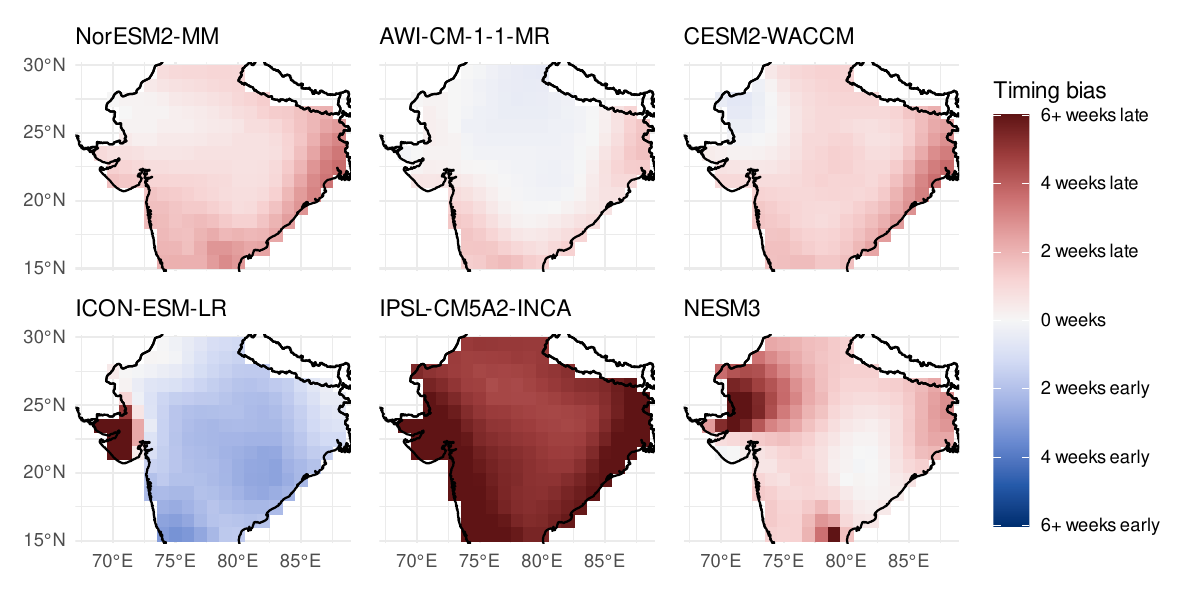}
  \caption{Maps of onset timing bias}
  \label{fig:high}
\end{subfigure}
\begin{subfigure}{\linewidth}
  \centering
  \includegraphics[width=\linewidth]{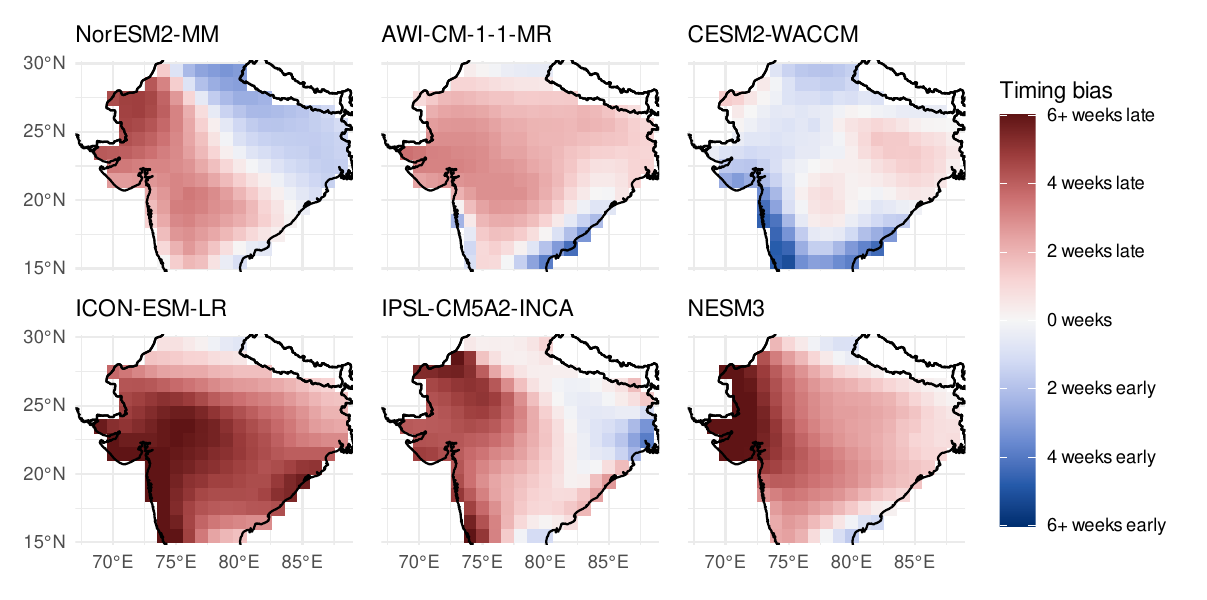}
  \caption{Maps of retreat timing bias}
  \label{fig:low}
\end{subfigure}
\vspace{-2.5mm}
\caption{Timing bias in the ISM onset and retreat dates for six CMIP6 models. Timing biases are computed at each location following the procedure in Figure \ref{fig:efda}. At each location, timing bias is indicated with color fill. Panel (a) and (b) display the onset and retreat biases, respectively.}
\label{fig:tb_maps}
\end{figure}

Overall, our results clearly characterize timing biases in the onset and retreat of the Indian Summer Monsoon, providing climate modelers with a useful diagnostic tool to understand a model's performance in phase variability. We particularly notice large regions of late arrival and retreat biases in many models. Such information is not available when one simply uses summary statistics such as the mean JJAS rainfall to evaluate the skill of climate model in simulating monsoon. Furthermore, the timing biases shown in Figure \ref{fig:tb_maps}, particularly the late retreat biases, indicate that some models simulate a significant amount of monsoon precipitation outside of the JJAS time interval. Therefore, a model with low JJAS rainfall may not actually underestimate the ISM precipitation, but instead mistime the event.

\section{Discussion}
\label{sec:discussion}

To aid climate model evaluation for precipitation, we developed a new metric called sliced elastic distance which quantifies model performance by separately accounting for spatial and temporal variability. The sliced elastic distance is a vector-valued metric that captures  the amplitude, phase, and translation distance between two spatiotemporal processes. It extends the traditional elastic distance by incorporating spatially varying time warping through a kernel convolution procedure. We focus on amplitude and phase to provide a unique evaluation of precipitation models. Phase variability corresponds to errors in the timing of a system evolution, such as seasonal transitions of monsoons, while amplitude variability reflects differences in the underlying precipitation patterns after phase alignment. Distinguishing these two components offers deeper insight into the sources of model misfit. By applying our method to evaluate CMIP6 precipitation models, we quantified how each model performs relative to observed data based on their amplitude and phase distances. In addition, we examined the performance of CMIP6 models in capturing the proper onset and retreat timing of the Indian Summer Monsoon at a local spatial scale.

One limitation of our approach is that the slicing process does not include a spatial alignment in addition to the standard time alignment.
Spatial warping was used in \cite{levy2014correcting} to correct errors in the location of precipitation events in climate models, but adapting this approach for our daily analysis would be computationally infeasible. 
Another limitation is that our method assumes a closed and bounded time domain, i.e. $[0, 1]$ arbitrarily defined by setting January 1st to 0 and December 31st to 1. This does not allow for inter-year warping, as an open time domain, i.e. $\mathbb{S}^1$ \citep{sebastian2003aligning}, would, which may potentially distort the intra-year alignments.
While such closed temporal domains are commonly used in elastic FDA \citep{joshi2007novel}, only the properties of the amplitude (shape) distance have been established on this domain \citep{srivastava2016functional}. 
Lastly, while our threshold-based method for determining local onset and retreat dates is convenient for shape analysis, alternative approaches exist for defining monsoon onset and retreat, which may yield different results \citep[i.e.,][]{misra2018local}. Timing biases based on those dates could produce slightly different results, but the conclusion is likely qualitatively robust.

Our method was developed in the context of evaluating precipitation models. However, our method can be applied to compare any two spatiotemporal fields where phase is of interest or matters. 
For example, the sliced elastic distance can also be used to assess climate models for temperatures (including the evolution of ENSO events), or for climate model tuning \citep{hourdin2017art}.
Beyond climate science, potential applications of our method include comparing spectral data from remote sensing products, analyzing functional measurements from atomic-scale microscopy in materials science, or analyzing $360^\circ$ videos, where the spherical data considerations in our method are directly applicable.

\section{Acknowledgments}

We acknowledge the World Climate Research Programme, which, through its Working Group on Coupled Modelling, coordinated and promoted CMIP6. We thank the climate modeling groups for producing and making available their model output, the Earth System Grid Federation (ESGF) for archiving the data and providing access, and the multiple funding agencies who support CMIP6 and ESGF.
NCEP/DOE Reanalysis II data provided by the NOAA PSL, Boulder, Colorado, USA, from their website at \url{https://psl.noaa.gov}.

Robert Garrett was partially supported by NSF-DGE-1922758 from the National Science Foundation. Bo Li was partially supported by NSF-DMS-2124576 from the National Science Foundation. We would like to thank Robert Krueger for the interesting discussions on convolution theorems and spherical harmonics.

\setlength{\bibsep}{4pt}
\bibliographystyle{rss}
\bibliography{ref}

\newpage 
\Appendix
\setcounter{page}{1}
\appendix

\section{Spatial resolution of CMIP6 model outputs}
\label{a:data1}

\begin{figure}[h]
\centering
\label{a:data}
\resizebox{\columnwidth}{!}{%
    \begin{tabular}{lcc}
    \toprule
    \textbf{Model} & \textbf{Longs} & \textbf{Lats} \\
    \midrule
    ACCESS-CM2 & 192 & 144 \\
    ACCESS-ESM1-5 & 192 & 145 \\
    AWI-CM-1-1-MR & 384 & 192 \\
    AWI-ESM-1-1-LR & 192 & 96 \\
    BCC-ESM1 & 128 & 64 \\
    CESM2 & 288 & 192 \\
    CESM2-FV2 & 144 & 96 \\
    CESM2-WACCM & 288 & 192 \\
    CESM2-WACCM-FV2 & 144 & 96 \\
    CMCC-CM2-HR4 & 288 & 192 \\
    CMCC-CM2-SR5 & 288 & 192 \\
    CMCC-ESM2 & 288 & 192 \\
    CanESM5 & 128 & 64 \\
    E3SM-1-0 & 360 & 180 \\
    E3SM-2-0 & 360 & 180 \\
    E3SM-2-0-NARRM & 360 & 180 \\
    EC-Earth3 & 512 & 256 \\
    EC-Earth3-AerChem & 512 & 256 \\
    EC-Earth3-CC & 512 & 256 \\
    EC-Earth3-Veg & 512 & 256 \\
    EC-Earth3-Veg-LR & 320 & 160 \\
    FGOALS-f3-L & 288 & 180 \\
    FGOALS-g3 & 180 & 80 \\
    \midrule
    \bottomrule
  \end{tabular}
\quad
\begin{tabular}{lcc}
\toprule
\textbf{Model} & \textbf{Longs} & \textbf{Lats} \\
\midrule
    GFDL-CM4 & 288 & 180 \\
    GFDL-ESM4 & 288 & 180 \\
    GISS-E2-2-G & 144 & 90 \\
    ICON-ESM-LR* & N/A & N/A \\
    IITM-ESM & 192 & 94 \\
    INM-CM4-8 & 180 & 120 \\
    INM-CM5-0 & 180 & 120 \\
    IPSL-CM5A2-INCA & 96 & 96 \\
    IPSL-CM6A-LR & 144 & 143 \\
    KACE-1-0-G & 192 & 144 \\
    KIOST-ESM & 192 & 96 \\
    MIROC6 & 256 & 128 \\
    MPI-ESM-1-2-HAM & 192 & 96 \\
    MPI-ESM1-2-HR & 384 & 192 \\
    MPI-ESM1-2-LR & 192 & 96 \\
    MRI-ESM2-0 & 320 & 160 \\
    NESM3 & 192 & 96 \\
    NorCPM1 & 144 & 96 \\
    NorESM2-LM & 144 & 96 \\
    NorESM2-MM & 288 & 192 \\
    SAM0-UNICON & 288 & 192 \\
    TaiESM1 & 288 & 192 \\
    \midrule
    \bottomrule
    \\
\end{tabular}}
\vspace{2mm}
\captionof{table}{List of obtained CMIP6 models and their spatial resolutions. Note that ICON-ESM-LR was obtained on an icosahedral grid made up of 10,242 cells.}
\end{figure}

\section{Proof of Theorem \ref*{t:VVMSD}}
\label{a:VVMSD}

Proof of Theorem \ref{t:VVMSD} relies on the convolution theorem for finite-dimensional unit spheres \citep{driscoll1994computing}, as well as Lemmas \ref{l:slice} and \ref{l:pseudoSD}. Lemma \ref{l:slice} asserts that the slicing operation in equation (\ref{eq:slice}) produces valid functional data in $\mathcal{F_T}$, and Lemma \ref{l:pseudoSD} states that pseudometrics on $\mathcal{F_T}$ extend to pseudometrics on $\mathcal{F}_{\mathcal{S} \times \mathcal{T}}$ when our slicing method is applied. We first develop Lemma \ref{l:elastic}, then technical lemmata \ref{l:slice} and \ref{l:pseudoSD}, to finally prove Theorem \ref{t:VVMSD}.

\begin{lem}
\label{l:elastic}
$D_{E}(f,g)$ is a vector-valued metric on $\mathcal{F_T}$.
\end{lem}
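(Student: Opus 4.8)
The plan is to apply the characterization of vector-valued metrics due to \cite{jachymski2016around} quoted in the excerpt: it suffices to verify that (i) each of $D_A$, $D_P$, and $D_T$ is a pseudometric on $\mathcal{F_T}$, and (ii) the \emph{separation} property holds, namely that whenever $f \neq g$ at least one of the three distances is strictly positive. Conditions (i) and (ii) together are precisely what the characterization requires, so no further structure (such as a triangle inequality for the vector $D_E$ itself) needs to be checked.

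For (i), I would lean directly on \cite{srivastava2016functional}, which establishes that $D_A$, $D_P$, and $D_T$ are \emph{proper} metrics on their respective spaces: the amplitude (shape) space $\mathcal{F_T}/\Gamma$, the phase space $\Gamma$, and $\mathbb{R}$ under the evaluation map $f \mapsto f(0)$. A function that is a genuine metric on a quotient or image of $\mathcal{F_T}$ is automatically a pseudometric when pulled back to $\mathcal{F_T}$: non-negativity, symmetry, $D_i(f,f)=0$, and the triangle inequality transfer verbatim, and only the identity-of-indiscernibles axiom is relaxed (distinct functions may share the same amplitude, the same phase, or the same initial value). Thus each component is a pseudometric on $\mathcal{F_T}$, settling (i).

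For (ii), suppose toward a contradiction that $f \neq g$ while $D_A(f,g) = D_P(f,g) = D_T(f,g) = 0$. Using the identifiability convention $\gamma_f^\ast = I$ described in Section \ref{sec:tb}, $D_A(f,g)=0$ yields an optimal relative warping $\gamma_g^\ast$ with $q_f = (q_g,\gamma_g^\ast)$, while $D_P(f,g)=0$ forces $\gamma_g^\ast = I$ (the Fisher--Rao distance on $\Gamma$ vanishes only at the identity), so in fact $q_f = q_g$. Since the SRVF determines a function up to a vertical translation, $q_f = q_g$ together with $D_T(f,g)=0$, i.e. $f(0)=g(0)$, gives $f=g$, the desired contradiction. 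Hence at least one component is strictly positive whenever $f \neq g$, and invoking \cite{jachymski2016around} concludes that $D_E$ is a vector-valued metric on $\mathcal{F_T}$.

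I expect the crux — and the step requiring the most care — to be the separation argument in (ii), specifically the claim that vanishing amplitude \emph{and} phase distances force $q_f = q_g$ rather than mere membership in a common equivalence class. This hinges on correctly accounting for the relative nature of the optimal warpings $\gamma_f^\ast, \gamma_g^\ast$ and on the fact that the amplitude, phase, and translation components jointly form a complete invariant of $f$: the SRVF-plus-initial-value map is injective on $\mathcal{F_T}$, and the amplitude/phase split reassembles the full SRVF. Everything else reduces to citing the metric properties already established by \cite{srivastava2016functional}.
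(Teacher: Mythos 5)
Your proposal is correct and takes essentially the same route as the paper's proof: both invoke the characterization of vector-valued metrics from \cite{jachymski2016around}, obtain the pseudometric properties of $D_A$, $D_P$, and $D_T$ from \cite{srivastava2016functional}, and establish the separation property by showing that $D_P(f,g)=0$ forces the relative warpings to coincide, that $D_A(f,g)=0$ then yields $q_f=q_g$, and that $D_T(f,g)=0$ pins down $f(0)=g(0)$, whence SRVF invertibility under absolute continuity gives $f=g$. Your use of the identifiability convention $\gamma_f^\ast = I$ is only a cosmetic variant of the paper's step of cancelling the common warping via the isometry of the warping action (Lemma 4.2 of \cite{srivastava2016functional}), and your pullback framing of the pseudometric properties matches the paper's direct verification.
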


\begin{proof}
Proposition 2.1 in \cite{jachymski2016around} provides a clear structure for our proof. First, we show that amplitude, phase, and translation distance are each a pseudometric on $\mathcal{F_T}$. Then, we finish the proof by showing the final property: for any $f,g \in \mathcal{F_T}$, $f\neq g \Rightarrow D_A(f,g)>0,$ $D_P(f,g)>0,$ or $D_T(f,g)>0$.

Section 4.10.1  in \cite{srivastava2016functional} shows that amplitude distance is a proper metric on the quotient space $\mathbb{L}_2/\Gamma$. Since for all functions $f,g \in \mathcal{F_T}$, there exist unique orbits $[q_f],[q_g]\in\mathbb{L}_2/\Gamma$, the identity, symmetry, and triangle inequality properties hold trivially on $\mathcal{F_T}$ as well.

Section 4.10.2 in \cite{srivastava2016functional} shows that for two functions $f,g\in\mathcal{F_T}$, phase distance satisfies the Identity, Symmetry, and Triangle Inequality properties on $\mathcal{F_T}$. So, phase distance is a pseudometric on $\mathcal{F_T}$.

For translation distance, let $f,g,h \in \mathcal{F_T}$. Note $f(0),g(0),h(0)\in\mathbb{R}$. Then,
\begin{enumerate}
    \item $D_T(f,f) = |f(0)-f(0)| = 0$
    \item $D_T(f,g) = |f(0)-g(0)| = |g(0)-f(0)|= D_T(g,f)$
    \item $D_T(f,h) = |f(0)-h(0)| \leq |f(0)-g(0)|+|g(0)-h(0))| = D_T(f,g)+D_T(g,h)$.
\end{enumerate}
Therefore, translation distance is a pseudometric on $\mathcal{F_T}$.

It remains to show that for all $f,g \in \mathcal{F_T}$, $f\neq g \Rightarrow D_A(f,g)>0,$ $D_P(f,g)>0,$ or $D_T(f,g)>0$. We proceed by proving the contrapositive: for all $f,g\in\mathcal{F_T}$, if $D_A(f,g)=D_P(f,g)=D_T(f,g)=0$ then $f=g$. Let $f,g\in\mathcal{F_T}$ with $D_A(f,g)=D_P(f,g)=D_T(f,g)=0$. Denote the relative phase of $f$ with respect to $g$ as $(\gamma_f^*,\gamma_g^*)$. $D_P(f,g) = 0 \Rightarrow \gamma_f^* = \gamma_g^*$. So,
\begin{align}
0 &= D_A(f,g) \\
  &= \mathrm{inf}_{\gamma_f,\gamma_g\in\tilde\Gamma_I} ||(q_{f},\gamma_f) - (q_{g},\gamma_g)|| \\
  &= ||(q_{f},\gamma_f^*) - (q_{g},\gamma_g^*)|| \\
  &= ||(q_{f},\gamma_f^*) - (q_{g},\gamma_f^*)|| \\
  &= ||q_{f} - q_{g}||.
\end{align}
The third line follows by the definition of relative phase. The last line follows by Lemma 4.2 in \cite{srivastava2016functional}.

It follows by metricity of the Euclidean norm that $q_f = q_g$. In addition, $D_T(f,g) = 0 \Rightarrow f(0)=g(0)$. So, we have $f(0)=g(0)$ and $q_f = q_g$. By the absolute continuity of $f$ and $g$, it follows that $f=g$. So, for all $f,g\in\mathcal{F_T}$, $D_A(f,g)=D_P(f,g)=D_T(f,g)=0 \Rightarrow f=g$.

We have shown that the elastic distance $D_E(f,g)$ is a family of pseudometrics on $\mathcal{F_T}$ that jointly satisfies the positivity property on $\mathcal{F_T}$. So, by Proposition 2.1 in \cite{jachymski2016around}, it follows that $D_E$ is a vector-valued metric on $\mathcal{F_T}$.
\end{proof}

\begin{lem}
\label{l:slice}
If $f \in \mathcal{F_{S\times T}}$, then the slice $f_{s}(t)$ as defined in equation (\ref{eq:slice}) is an element of $\mathcal{F_T}$ for all $s\in \mathcal{S}$.
\end{lem}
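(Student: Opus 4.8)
The plan is to show that the slice $f_s$, viewed as a function of $t$ alone, is absolutely continuous on $\mathcal T=[0,1]$, since that is exactly what membership in $\mathcal{F_T}$ requires. I would proceed in two stages: first confirm that $f_s(t)$ is well defined and continuous, and then exhibit $f_s$ as the indefinite integral of an $L^1$ function, which on a compact interval is equivalent to absolute continuity.

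For the first stage, note that $\mathcal S=\mathbb S^2$ is compact and $f$ is jointly continuous, so $f$ is bounded on $\mathcal S\times\mathcal T$; the Wendland kernel $k_s$ is continuous with compact support $\overline{B_\theta(s)}$, hence bounded by some $M$ and integrable over $\mathcal S$. Thus the integral in (\ref{eq:slice}) converges for every $t$, and continuity of $t\mapsto f_s(t)$ follows from the dominated convergence theorem with dominating function $(\sup_{\mathcal S\times\mathcal T}|f|)\,|k_s|$.

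For the second stage, fix $s$ and use the hypothesis that each component $f(u,\cdot)$ is absolutely continuous to write $f(u,t)=f(u,0)+\int_0^t \partial_\tau f(u,\tau)\,d\tau$. Substituting into (\ref{eq:slice}) and interchanging the order of integration gives
\[
f_s(t)=f_s(0)+\int_0^t\Big(\int_{\mathcal S}\partial_\tau f(u,\tau)\,k_s(u)\,du\Big)\,d\tau,
\]
which represents $f_s$ as the indefinite integral of $h_s(\tau):=\int_{\mathcal S}\partial_\tau f(u,\tau)\,k_s(u)\,du$. Hence $f_s\in\mathcal{F_T}$ as soon as the interchange is legitimate and $h_s\in L^1(\mathcal T)$.

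The hard part is exactly this last point, and both requirements reduce to the finiteness of
\[
\int_{\mathcal S}\int_0^1|\partial_\tau f(u,\tau)|\,|k_s(u)|\,d\tau\,du=\int_{\mathcal S}V_t\big(f(u,\cdot)\big)\,|k_s(u)|\,du,
\]
where $V_t(f(u,\cdot))$ is the total variation of $f(u,\cdot)$ over $\mathcal T$. Joint measurability of $\partial_\tau f$ is not an obstacle, since it is an almost-everywhere pointwise limit of the continuous difference quotients $n\,[f(u,\tau+1/n)-f(u,\tau)]$; the real content is the finiteness of this double integral. Because $|k_s|\le M$ and $k_s$ is supported on the compact set $\overline{B_\theta(s)}$, finiteness reduces to integrability of the variation profile $u\mapsto V_t(f(u,\cdot))$ over that neighborhood. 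I would obtain this either by reading local integrability of $u\mapsto V_t(f(u,\cdot))$ off the definition of $\mathcal{F_{S\times T}}$---the natural regularity to attach to the space, since joint continuity of $f$ alone does not control the variation---or, when the stronger hypothesis of a jointly continuous $\partial_\tau f$ is available (as for the trend-filtered slices in the application), by the uniform bound $\sup_u\int_0^1|\partial_\tau f(u,\tau)|\,d\tau<\infty$ on the compact domain. With finiteness established, Tonelli justifies the swap and Fubini yields $h_s\in L^1(\mathcal T)$, so $f_s$ is absolutely continuous and therefore lies in $\mathcal{F_T}$.
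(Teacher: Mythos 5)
Your route is genuinely different from the paper's. The paper argues by contradiction: assuming the slice $f_{s^*}$ fails absolute continuity, it extracts, for each $\delta$, the single worst of the $M$ violating sub-intervals, applies Cauchy--Schwarz against $k_{s^*}$ to transfer the increment bound $\epsilon^*/M$ from the slice to $\max_{s\in\mathcal S}|f(s,t_1)-f(s,t_2)|$, and concludes that the component function at the maximizing location is not absolutely continuous, contradicting $f\in\mathcal{F}_{\mathcal{S}\times\mathcal{T}}$. You instead argue directly: continuity of $f_s$ by dominated convergence, then the representation $f_s(t)=f_s(0)+\int_0^t h_s(\tau)\,d\tau$ with $h_s(\tau)=\int_{\mathcal S}\partial_\tau f(u,\tau)k_s(u)\,du$ via Tonelli--Fubini, which exhibits $f_s$ as an indefinite integral of an $L^1$ function and hence absolutely continuous. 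Your approach is more informative where it applies, since it also yields the derivative formula for $\dot f_s$ that the SRVF computation of the slices implicitly uses.

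The integrability hypothesis you flag is not a blemish of your method; it is precisely where the paper's own argument breaks, and your hedge is warranted. In the paper's proof the count $M$ depends on $\delta$, so the final increment bound $\epsilon=(\epsilon^*/M)/\{\int_{\mathcal S}k_{s^*}(s;r)^2ds\}^{1/2}$ is not a single positive constant valid for all $\delta$; moreover the location $s'$ varies with $\delta$ (that second issue could be patched by uniform continuity of $f$ on the compact $\mathcal S\times\mathcal T$, but the first cannot). Indeed, joint continuity plus componentwise absolute continuity genuinely do not suffice: take disjoint spatial bumps $\tau_n$ accumulating at a point, piecewise-linear approximations $c_n$ of the Cantor function $c$, set $b_n=\int_{\mathcal S}\tau_n(u)k_{s^*}(u)\,du$ with the bump sizes chosen so that $\|c_{n+1}-c_n\|_\infty/b_n\to 0$, and define $f(u,t)=\sum_n \tau_n(u)\,\bigl(c_{n+1}(t)-c_n(t)\bigr)/b_n$. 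Then $f$ is jointly continuous, every $f(u,\cdot)$ is absolutely continuous, yet the slice telescopes to $f_{s^*}=c-c_1$, which is not absolutely continuous. So the lemma requires some control of the local variation profile $u\mapsto V_t(f(u,\cdot))$ beyond the stated definition of $\mathcal{F}_{\mathcal{S}\times\mathcal{T}}$; under your Tonelli condition (or the stronger jointly continuous $\partial_\tau f$, which holds for the trend-filtered climatologies in the application) your proof is complete, whereas the paper's proof of the unconditioned statement has a real gap at the $\epsilon^*/M$ step.
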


\begin{proof}
Let $f\in \mathcal{F_{S\times T}}$, $s^*\in \mathcal{S}$, and $r>0$. Suppose for contradiction that $f_{s^*}(t)\notin \mathcal{F_T}$. Then, $f_{s^*}(t)$ is not absolutely continuous, so there exists some $\epsilon^*>0$ such that for all $\delta >0 $, there exists a finite sequence of $M$ sub-intervals $(a_m,b_m)\in[0,1]$, $m=1,...,M$, with $\sum_{m=1}^M (b_m-a_m)<\delta$ and $\sum_{m=1}^M|f_{s^*}(a_m)-f_{s^*}(b_m)|\geq\epsilon^*$. 

Let $m_{max} = \text{arg max}_{m\in{1,...,M}}|f_{s^*}(a_m)-f_{s^*}(b_m)|$. It follows that $|f_{s^*}(a_{m_{max}})-f_{s^*}(b_{m_{max}})| \geq \epsilon^*/M$. For clarity, denote $t_1 = a_{m_{max}}$ and $t_2 = b_{m_{max}}$. Note that $t_2-t_1 < \delta$. We have:

\begin{align}
\epsilon^*/M &\leq |f_{s^*}(t_1)-f_{s^*}(t_2)| \\
  &=\left|\int_{\mathcal{S}}f(s,t_1)k_{s^*}(s;r)ds-\int_{\mathcal{S}}f(s,t_2)k_{s^*}(s;r)ds\right| \\
  &=\left|\int_{\mathcal{S}}\left\{f(s,t_1)-f(s,t_2)\right\}k_{s^*}(s;r)ds\right| \\
  &\leq\left[\int_{\mathcal{S}}\left\{f(s,t_1)-f(s,t_2)\right\}^2ds\right]^{1/2}\left\{\int_{\mathcal{S}}k_{s^*}(s;r)^2ds\right\}^{1/2},
\end{align}
where the last line follows by the Cauchy-Schwarz Inequality. This implies:
\begin{align}
\frac{(\epsilon^*/M)^2}{\int_{\mathcal{S}}k_{s^*}(s;r)^2ds} &\leq\int_{\mathcal{S}}\left\{f(s,t_1)-f(s,t_2)\right\}^2ds \\
&\leq \text{max}_{s\in \mathcal{S}}\left\{f(s,t_1)-f(s,t_2)\right\}^2.
\end{align}
Now, taking the square root of each side, we get:
\begin{equation}
\frac{\epsilon^*/M}{\left\{\int_{\mathcal{S}}k_{s^*}(s;r)^2ds\right\}^{1/2}} 
\leq \text{max}_{s\in \mathcal{S}}|f(s,t_1)-f(s,t_2)|. \\
\end{equation}
Denote the quantity on the left side of the above inequality as:
\begin{equation}
\epsilon = \frac{\epsilon^*/M}{\left\{\int_{S^2}k_{s^*}(s;r)^2ds\right\}^{1/2}} .
\end{equation}
Since $k_{s^*}(s;r) \geq 0$ for all $s,s^*\in \mathcal{S}$, $r>0$ implies that $\int_{\mathcal{S}}k_{s^*}(s;r)^2ds>0$. So, $\epsilon>0$. Let $s' =  \text{arg max}_{s\in \mathcal{S}}|f(s,t_2)-f(s,t_1)|$. Now, we have found $\epsilon>0$ such that for all $\delta>0$, there exists a sub-interval $(t_1,t_2)\subset[0,1]$ with $t_2-t_1<\delta$ and:
\begin{equation}
|f(s',t_2)-f(s',t_1)| \geq \epsilon.
\end{equation}
We have found a location $s'\in \mathcal{S}$ such that $f(s',t)$ is not an absolutely continuous function of time, so we have a contradiction of the assumption that $f \in \mathcal{F_{S\times T}}$. Our proof by contradiction is complete, therefore $f_{s}(t)\in \mathcal{F_T}$ for all $s\in \mathcal{S}$ and $r>0$.
\end{proof}

\begin{lem}
\label{l:pseudoSD}
If $D$ is a pseudometric on $\mathcal{F}$ then $D_S$, the sliced version of $D$, is a pseudometric on $\mathcal{F_{S\times T}}$.
\end{lem}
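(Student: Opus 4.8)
The plan is to verify that $D_S$ satisfies the defining properties of a pseudometric on $\mathcal{F_{S\times T}}$: non-negativity, the identity property $D_S(f,f)=0$, symmetry, and the triangle inequality. The starting point is Lemma \ref{l:slice}, which guarantees that for every $s\in\mathcal S$ the slices $f_s$ and $g_s$ lie in $\mathcal{F_T}$, so that $D(f_s,g_s)$ is a well-defined non-negative real number at each $s$; this makes the integrand $D(f_s,g_s)^2$ in the definition of $D_S$ meaningful pointwise. Before integrating, I would note that $s\mapsto D(f_s,g_s)$ is measurable and integrable: since $\mathcal S=\mathbb S^2$ is compact and the slicing map $s\mapsto f_s$ inherits continuity from the continuity of $f$ and of the kernel $k_s(\cdot;\theta)$ (the chordal distance entering $k_s$ varying continuously with $s$), the map $s\mapsto D(f_s,g_s)$ is at least measurable and bounded on the compact domain, so the integral over $\mathcal S$ is finite.

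Non-negativity of $D_S$ is immediate, being the square root of an integral of a non-negative function. The identity property follows because $D$ is a pseudometric, hence $D(f_s,f_s)=0$ for every $s$, which forces $\int_{\mathcal S}D(f_s,f_s)^2\,ds=0$ and therefore $D_S(f,f)=0$. Symmetry is equally direct: $D(f_s,g_s)=D(g_s,f_s)$ for all $s$ by symmetry of $D$, so the two integrals defining $D_S(f,g)$ and $D_S(g,f)$ coincide.

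The triangle inequality is the only step requiring real work, and it is where I would treat $D_S(f,g)$ as the $L^2(\mathcal S)$ norm of the non-negative function $a(s):=D(f_s,g_s)$. Fixing $f,g,h\in\mathcal{F_{S\times T}}$, set $a(s)=D(f_s,g_s)$ and $b(s)=D(g_s,h_s)$. Applying the pointwise triangle inequality for the pseudometric $D$ at each $s$ gives $D(f_s,h_s)\le a(s)+b(s)$; since both sides are non-negative, squaring preserves the inequality, yielding $D(f_s,h_s)^2\le\bigl(a(s)+b(s)\bigr)^2$. Integrating over $\mathcal S$ and taking square roots (both monotone operations) then gives
\begin{equation*}
D_S(f,h)=\Bigl\{\int_{\mathcal S}D(f_s,h_s)^2\,ds\Bigr\}^{1/2}\le\Bigl\{\int_{\mathcal S}\bigl(a(s)+b(s)\bigr)^2\,ds\Bigr\}^{1/2}=\|a+b\|_{L^2(\mathcal S)}.
\end{equation*}
The Minkowski inequality in $L^2(\mathcal S)$ finishes the argument, since $\|a+b\|_{L^2(\mathcal S)}\le\|a\|_{L^2(\mathcal S)}+\|b\|_{L^2(\mathcal S)}=D_S(f,g)+D_S(g,h)$.

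I anticipate the triangle inequality to be the main obstacle, though it reduces to an application of Minkowski's inequality once the sliced distance is recognized as an $L^2$ norm of the per-slice distances. The one genuinely technical point worth checking carefully is the measurability and integrability of $s\mapsto D(f_s,g_s)$ over $\mathcal S$, which is what licenses the integration and the $L^2$ machinery above; everything else follows by transporting the pointwise pseudometric axioms of $D$ through the monotonicity of integration and the square root.
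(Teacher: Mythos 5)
Your proof is correct and follows essentially the same route as the paper's: verify the identity and symmetry properties pointwise from the corresponding properties of $D$, then obtain the triangle inequality by applying the pointwise triangle inequality for $D$ at each slice and invoking the Minkowski inequality in $L^2(\mathcal S)$. Your added attention to the measurability and integrability of $s\mapsto D(f_s,g_s)$ (via Lemma \ref{l:slice} and compactness of $\mathbb S^2$) is a point the paper's proof passes over silently, but it does not change the argument's structure.
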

\begin{proof}
Let $D$ be a pseudometric on $\mathcal{F_T}$. Let $D_S$ be the sliced distance function corresponding to $D$ following the process in Theorem \ref{t:VVMSD}. We begin by proving the identity property. Let $f(s,t)\in\mathcal{F_{S\times T}}$. Then,
\begin{align}
    D_S(f(s,t),f(s,t)) &= \left\{\int_{S^2}D(f_s(t),f_s(t))^2ds\right\}^{1/2}\\
        &=\left\{\int_{S^2}0^2ds\right\}^{1/2} \\
        &= 0.
\end{align}
The second line holds by the identity property of $D$. Next, we show the symmetry property. Let $f(s,t),g(s,t)\in\mathcal{F_{S\times T}}$. Then,
\begin{align}
    D_S(f(s,t),g(s,t)) &= \left\{\int_{S^2}D(f_s(t),g_s(t))^2 ds\right\}^{1/2} \\
                         &= \left\{\int_{S^2}D(g_s(t),f_s(t))^2 ds\right\}^{1/2} \\
                         &= D_S(g(s,t),f(s,t))
\end{align}
The second line holds by the symmetry property of $D$. Finally, we show the triangle inequality property. Let $f(s,t),g(s,t),h(s,t) \in \mathcal{F_{S\times T}}$. Then,
\begin{align}
    D_S(f(s,t),h(s,t)) &= \left\{\int_{S^2}D(f_s(t),h_s(t))^2 ds\right\}^{1/2} \\
                         &\leq \left[ \int_{S^2} \Bigl\{ D(f_s(t),g_s(t)) + D(g_s(t),h_s(t)) \Bigr\}^2 ds \right]^{1/2}\\
                         &\leq \left\{\int_{S^2}D(f_s(t),g_s(t))^2ds\right\}^{1/2}+\left\{\int_{S^2}D(g_s(t),h_s(t))^2ds\right\}^{1/2} \\
                         &= D_S(f(s,t),g(s,t))+D_S(g(s,t),h(s,t))
\end{align}
The second line holds by the triangle inequality property of $D$ on $\mathcal{F_T}$. The third line holds by the Minkowski Inequality.

Finally, we have that for a pseudometric $D$ on $\mathcal{F_T}$, the sliced version of $D$, $D_S$, satisfies all three properties of a pseudometric on $\mathcal{F_{S\times T}}$. Therefore, if $D$ is a pseudometric on $\mathcal{F_T}$, then $D_S$ is a pseudometric on $\mathcal{F_{S\times T}}$.
\end{proof}

Using the previous lemmas along with results from \cite{driscoll1994computing}, we can prove the previously stated Theorem \ref{t:VVMSD}.

\begin{repthm*}
If $D=(D_1,...,D_m)^T$ is a vector-valued metric on $\mathcal{F_T}$, and $f_s(t)$ and $g_s(t)$ are respectively the slice functions of $f(u,t)\in\mathcal{F_{S\times T}}$ and $g(u,t)\in\mathcal{F_{S\times T}}$ using a spatially continuous kernel $k(u;\theta)$ with a positive spectral density on spherical domain $\mathcal S\in \mathbb S^2$ as defined in (\ref{eq:slice}), then the vector-valued function $D_S = (D_{S1},...,D_{Sm})^T$ with  each component defined as \begin{equation*}
    D_{Si}(f,g) = \left\{\int_{\mathcal S}D_i\left(f_{s},g_{s}\right)^2 ds \right\}^{1/2}, \quad i=1,...,m,
\end{equation*}
is a vector-valued metric on $\mathcal F_{\mathcal S\times \mathcal T}$.
\end{repthm*}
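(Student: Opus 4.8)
The plan is to verify the two conditions in Proposition 2.1 of \cite{jachymski2016around}: that each component $D_{Si}$ is a pseudometric on $\mathcal{F_{S\times T}}$, and that the family $(D_{S1},\dots,D_{Sm})$ jointly separates points, i.e.\ for any $f\neq g$ in $\mathcal{F_{S\times T}}$ there is at least one index $i$ with $D_{Si}(f,g)>0$. Since $D$ is by hypothesis a vector-valued metric on $\mathcal{F_T}$, the same proposition (applied in the reverse direction) tells us that each $D_i$ is a pseudometric on $\mathcal{F_T}$. Lemma \ref{l:slice} guarantees that the slices $f_s,g_s$ lie in $\mathcal{F_T}$, so $D_i(f_s,g_s)$ is well defined, and Lemma \ref{l:pseudoSD} then immediately upgrades each $D_i$ to a pseudometric $D_{Si}$ on $\mathcal{F_{S\times T}}$. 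This disposes of the first condition, leaving only the separation property to establish.

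For the separation property I would argue by contrapositive: assume $D_{Si}(f,g)=0$ for every $i$, and deduce $f=g$. Because the integrand $D_i(f_s,g_s)^2$ is nonnegative, the equation $\int_{\mathcal S}D_i(f_s,g_s)^2\,ds=0$ forces $D_i(f_s,g_s)=0$ for almost every $s\in\mathcal S$, for each fixed $i$. Intersecting these finitely many full-measure sets, there is a full-measure set of locations at which $D_i(f_s,g_s)=0$ holds simultaneously for all $i=1,\dots,m$. Invoking the separation property of the vector-valued metric $D$ on $\mathcal{F_T}$ (the contrapositive of the positivity condition in Proposition 2.1), the simultaneous vanishing of all components implies $f_s=g_s$ in $\mathcal{F_T}$ for almost every $s$.

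It remains to promote this almost-everywhere equality of slices to $f=g$ on $\mathcal{S}\times\mathcal{T}$, and this is where the structural hypotheses on the kernel enter. First, because $f,g$ are continuous on the compact domain and $k_s(\cdot;\theta)$ is continuous, each map $s\mapsto f_s(t)$ is continuous on $\mathcal S$ for fixed $t$; two continuous functions on $\mathbb S^2$ that agree almost everywhere agree everywhere, so in fact $f_s=g_s$ for all $s\in\mathcal S$. Second, the slicing map $f(\cdot,t)\mapsto f_s(t)=\int_{\mathcal S}f(u,t)k_s(u;\theta)\,du$ is a spherical convolution against the (radial, chordal-distance) kernel $k$. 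By the convolution theorem for $\mathbb S^2$ \citep{driscoll1994computing}, the spherical-harmonic coefficients of the slice are, up to normalization, the products of the coefficients of $f(\cdot,t)$ with the spectral coefficients $\hat k_\ell$ of $k$; the assumption that $k$ has a positive spectral density means every $\hat k_\ell\neq 0$, so the convolution operator is injective on $L^2(\mathbb S^2)$. Applying injectivity at each fixed $t$ to the identity $f_s=g_s$ yields $f(\cdot,t)=g(\cdot,t)$ for every $t$, hence $f=g$. Together with the pseudometric properties above, Proposition 2.1 of \cite{jachymski2016around} then certifies that $D_S$ is a vector-valued metric on $\mathcal{F_{S\times T}}$.

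The hard part will be the injectivity of the spherical convolution: the pseudometric axioms transfer mechanically through Lemma \ref{l:pseudoSD} and the almost-everywhere vanishing is routine, but recovering $f$ from its slices genuinely requires the positive-spectral-density property of the kernel together with the convolution theorem, and this is precisely why the chordal distance---which renders the Wendland kernel positive definite, and hence spectrally positive, on $\mathbb S^2$---is adopted.
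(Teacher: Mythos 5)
Your proposal is correct and follows essentially the same route as the paper's proof: pseudometric transfer via Lemma \ref{l:pseudoSD} (with Lemma \ref{l:slice} for well-definedness), then the separation property by contrapositive, upgrading almost-everywhere vanishing of slices to everywhere vanishing via continuity of the convolution, and finally invoking the Driscoll--Healy spherical convolution theorem with the positive spectral density of the kernel to recover $f(\cdot,t)=g(\cdot,t)$ for each fixed $t$. Your phrasing of the last step as injectivity of the convolution operator on $L^2(\mathbb{S}^2)$ is just an abstract restatement of the paper's coefficient-by-coefficient argument $\widetilde{c_{h,t}}(l,m)=\alpha(l)\widetilde{h^t}(l,m)\tilde{k}(l,0)$, and your explicit intersection of the finitely many full-measure sets is a small point the paper leaves implicit.
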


\begin{proof} Let $D$ be a vector-valued metric on $\mathcal{F_T}$.
For each $i\in\{1,...,m\}$, let $D_{Si}$ be the sliced version of $D_i$ using kernel $k(u;\theta)$. By Lemma \ref{l:pseudoSD}, $D_S = (D_{S1},...,D_{Sm})$ is a family of pseudometrics on $\mathcal{F_{S\times T}}$. To show $D_S$ is a vector-valued metric on $\mathcal{F_{S\times T}}$, it suffices to show that for any $f,g \in \mathcal{F_{S\times T}}$, if $f\neq g$ then $D_{Si}(f,g) > 0$ for some $i\in\{1,...,m\}$. We proceed by proving the contrapositive: for all $f,g\in\mathcal{F_{S\times T}}$, if $D_S(f,g) = 0_m$ then $f=g$.

Let $f,g\in\mathcal{F_{S\times T}}$ with $D_S(f,g)=0_m$. Then $0 = \int_{\mathcal S}D_{Si}(f_{s},g_{s})ds$ for all $i\in\{1,...,n\}$. So, for all $i\in\{1,...,n\}$, $D_{Si}(f_{s},g_{s})=0$ for almost every $s\in \mathcal{S}$. By property of vector-valued metrics, this implies that $f_{s}(t) = g_{s}(t)$ for almost every $s\in \mathcal{S}$. Let $h(s,t)=f(s,t)-g(s,t)$. Note that $h_{s}(t) = \int_{\mathcal S}\{f(u,t)-g(u,t)\}k_{s}(u;\theta)du = f_{s}(t)-g_{s}(t)$. So, for all $t\in\mathcal T$, $h_{s}(t)=0$ for almost every $s \in \mathcal S$.

Now, fix $t\in\mathcal T$ and define the spatial convolution of $h$ with $k$ at time $t$ as $c_{h,t}(s) = \int_{\mathcal S} h(u,t)k_s(u;\theta)du$.
This convolution is a function of space only, serving as the spatial version of the previously defined slice functions, which are functions of time only.
Note that $c_{h,t}(s)$ is equal to $0$ for almost every $s\in \mathcal S$ because $h_{s}(t)=0$ for almost every $s \in \mathcal S$. Additionally, since $c_{h,t}(s)$ is defined as a convolution of continuous functions on $\mathcal S$, it is itself a continuous function on $\mathcal S$. It follows by property of continuity that $c_h(s,t)=0$ for all $s\in \mathcal S$. 

Using the spherical harmonics representation of $c_{h,t}(s)$ \citep{driscoll1994computing}, we can represent $c_{h,t}(s)$ as
\begin{equation}
  c_{h,t}(s) =\sum_{l\geq0}\sum_{|m|\leq l}\widetilde{c_{h,t}}(l,m)Y^m_l(s).
\end{equation}
Where $Y^m_l(s)$ are the spherical harmonics bases and $\widetilde{c_{h,t}}(l,m)$ are the spherical harmonics coefficients for $c_{h,t}(s)$. Since the bases $Y^m_l(s)$ are orthonormal and $c_{h,t}(s)=0$, we have that $\widetilde{c_{h,t}}(l,m)=0$ for all $l$ and $m$.

Since $c_h(s,t) = \int_{\mathcal S} h(u,t)k_s(u;\theta)du$ is a convolution of functions on $\mathcal S$, where $\mathcal S$ was previously defined to be the 2-dimensional unit sphere, using Theorem 1 in \cite{driscoll1994computing} we can write the spherical harmonics coefficients $\widetilde{c_{h,t}}(l,m)$ in terms of the spherical harmonics coefficients for $h(u,t)$ and $k_s(u;\theta)$, denoted respectively as $\widetilde{h^t}(l,m)$ and $\tilde{k}(l,m)$:
\begin{equation}
\widetilde{c_{h,t}}(l,m) = \alpha(l)\widetilde{h^t}(l,m)\tilde{k}(l,0),
\end{equation}
where $\alpha(l)=2\pi\sqrt{\frac{4\pi}{2l+1}}$. Clearly $\alpha(l)>0$ for all $l\geq 0$. 
By our assumption that $k(u;\theta)$ has positive spectral density on $\mathcal S$, we know that $\tilde{k}(l,0)>0$ for all $l\geq 0$.
Therefore, since $\widetilde{c_{h,t}}(l,m)=0$, we must have $\widetilde{h^t}(l,m)=0$ for all $l$ and $m$. Using the spherical harmonics representation for $h(s,t)$, we can see:
\begin{align}
  h(s,t) &= \sum_{l\geq0}\sum_{|m|\leq l}\widetilde{h^t}(l,m)Y^m_l(s)\\
  &= \sum_{l\geq0}\sum_{|m|\leq l}0*Y^m_l(s)\\
  &=0.
\end{align}
So, $h(s,t)=0$ for all $s\in \mathcal S$. Since $t$ was fixed arbitrarily, we also have that $h(s,t)=0$ for all $t\in\mathcal T$. Therefore, $0=h(s,t)=f(s,t)-g(s,t)$, implying that $f = g$. 

We have shown that for all $f,g\in \mathcal{F_{S\times T}}$, if $D_S(f,g)=0_m$, then $f=g$. Our proof by contrapositive is complete, therefore for all $f,g \in \mathcal{F_{S\times T}}$, if $f\neq g$, then $D_{Si}(f,g) > 0$ for some $i\in\{1,...,n\}$. So, for any vector-valued metric $D$ on $\mathcal{F}$, $D_S$ is a family of pseudometrics that jointly satisfies the positivity property on $\mathcal{F_{S\times T}}$. Therefore, by Proposition 2.1 in \cite{jachymski2016around}, if $D$ is a vector-valued metric on $\mathcal{F_T}$ and $k(u;\theta)$ is a continuous spatial kernel with positive spectral density on $\mathcal{S}$, then $D_S$, the sliced version of $D$ defined using $k(u;\theta)$, is a vector-valued metric on $\mathcal{F_{S\times T}}$.
\end{proof}

\newpage
\section{Sliced Elastic Distance Implementation}
\label{a:computation}

Before computing the sliced elastic distance, we estimate continuous functional data from the precipitation climatology at each location by applying quadratic trend filtering via the \verb|glmgen| R package using a smoothing parameter of $\lambda=1,250$ \citep{tibshirani2014adaptive}.
Because climate model output and observational data are discrete in both the space and time dimensions, the exact integrals in Definition \ref{d:sed} cannot be computed. We approximate these integrals with summations and averages over a discrete set of locations. 
It is the user's decision to choose locations at which they want to have the slice functions. This choice may depend on the grid density of data products and preferences for the spatial resolution of slices. In our simulation and data analysis, we choose a regular latitude-longitude grid, $G$, resolved by 180 latitude values and 360 longitude values, to represent the spatial domain. The resolution of $G$ is chosen to match the resolution of the GPCP data, but users can make their own choice for the grid size and structure. 
Given the two daily precipitation fields represented by $f(u,t)\in\mathcal{F_{S\times T}}$ and $g(u, t)\in\mathcal{F_{S\times T}}$ and the common grid $G$, the sliced elastic distance between $f$ and $g$ can be easily obtained through the following major steps: 

\label{algo:sed}
\begin{enumerate}
   \item {\bf Slicing} Compute the slice functions $f_s$ and $g_s$ for each location $s \in G$ through multiplication of $f(u,t)$ and $g(u,t)$ with the kernel $k_s(u)$ at each time point.
   \item {\bf Local elastic distances} At each location $s\in G$, compute approximate amplitude and phase distances $\tilde D_A (f_s, g_s)$ and $\tilde D_P (f_s, g_s)$ using the dynamic programming algorithm provided in R package \verb|fdasrvf| \citep{fdasrvf}. 
   \item {\bf Spatial weighting} Assign weights for each location $s \in G$, denoted as $w_s$, as the cosine of its latitude to adjust for the different areas of each grid cell. This follows the standard practice for global climate data, e.g., \cite{li2016comparison}. 
   \item {\bf Sliced elastic distance} Compute the approximate sliced elastic distance $\tilde D_{SE}$ between $f$ and $g$ as follows: 
   \begin{equation}
    \tilde D_{SE}(f,g) \approx 
    \begin{bmatrix}
    \Bigl\{W^{-1}\sum_{s\in G}w_s \tilde D_A(f_{s},g_{s})^2\Bigr\}^{1/2} \\ 
    \Bigl\{W^{-1}\sum_{s\in G}w_s \tilde D_P(f_{s},g_{s})^2\Bigr\}^{1/2} \\ 
    \Bigl\{W^{-1}\sum_{s \in G}w_s D_T(f_{s},g_{s})^2\Bigr\}^{1/2}
    \end{bmatrix},
\end{equation}
where $W^{-1}= {1}/\sum_{s\in G} w_s$. 
\end{enumerate}
See the supplemental material for our full code implementation in \verb|R|.
To produce the spatial maps, region means, and timing biases discussed in Section \ref{sec:method}, intermediate values for $\tilde D_A(f_s,g_s), \tilde D_P(f_s,g_s), D_T(f_s,g_s)$ and the relative phase functions $\gamma^*_{f_s},\gamma^*_{g_s}$ are saved at each location $s\in G$.
Note that when calculating the elastic distance between two functions $f,g\in\mathcal{F_T}$, the \verb|fdasrvf| implementation assumes that $\gamma^*_f(t) = I(t) = t$, the identity warping function, for identifiability. 

\section{Table of Global Rankings}
\label{a:table}

\begin{figure}[H]
\centering
\includegraphics[width=\linewidth]{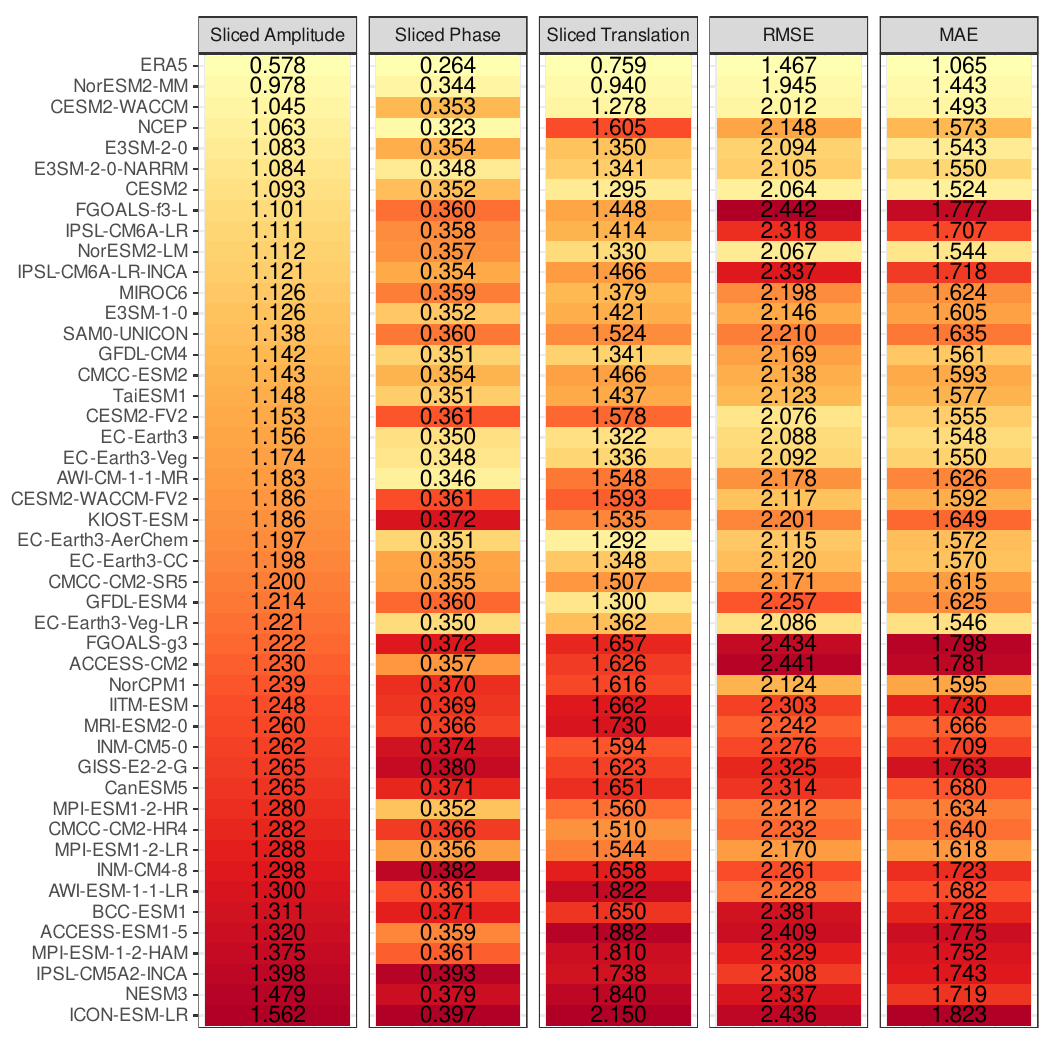}
\captionof{table}{CMIP6 daily precipitation models ranked based on similarity to GPCP. The ERA5 and NCEP Reanalyses are included with the CMIP6 models as a baseline. Distances are calculated between the climatologies using using sliced amplitude, phase, and translation distance (750km range) as well as RMSE and MAE. Color fill is used for visual comparison of the rankings from each distance, with yellow representing a low rank and red representing a high rank.}
\label{tab:full}
\end{figure}

\end{document}